\long\def\remove#1{}
\newtheorem{theorem}{Theorem}[section] 
\newtheorem{claim}[theorem]{Claim}
\newtheorem{cor}[theorem]{Corollary}
\newtheorem{fact}[theorem]{Fact}
\newtheorem{observation}[theorem]{Observation}
\newtheorem{proposition}[theorem]{Proposition}
\newtheorem{definition}[theorem]{Definition}
\newenvironment{proof}{{\em Proof:}}{\hfill{\hfill\rule{2mm}{2mm}}}
\newcommand {\mm}[1] {\ifmmode{#1}\else{\mbox{\(#1\)}}\fi}
\newcommand{\denselist}{\itemsep 0pt\parsep=1pt\partopsep 0pt}
\newcommand{\eps}{{\varepsilon}}
\newcommand{\reals}	{{\rm I\!\hspace{-0.025em} R}}
\newcommand{\resdel}[2]{{\rm Del}|_{#2}\, #1}
\newcommand{\vor}[1]{{\rm Vor\,}#1}
\newcommand{\del}[1]{{\rm Del\,}#1}
\newcommand{\B}		{{\cal B}}
\newcommand{\K}		{{\cal K}}
\newcommand{\G}		{\mathcal{G}}
\newcommand{\Rips}	{{\cal R}}
\newcommand{\homo}	{{\sf H}}
\DeclareMathOperator{\argmin} {\mathrm argmin}
\begin{document}

\title{Graph Induced Complex on Point Data}

\author{
Tamal K. Dey\thanks{
Department of Computer Science and Engineering,
The Ohio State University, Columbus, OH 43210, USA.
Email: {\tt tamaldey@cse.ohio-state.edu}}
\quad\quad
Fengtao Fan\thanks{
Department of Computer Science and Engineering,
The Ohio State University, Columbus, OH 43210, USA.
Email: {\tt fanf@cse.ohio-state.edu}}
\quad\quad Yusu Wang\thanks{
Department of Computer Science and Engineering,
The Ohio State University, Columbus, OH 43210, USA.
Email: {\tt yusu@cse.ohio-state.edu}}
}

\date{}
\maketitle

\begin{abstract}
The efficiency of extracting topological information from point
data depends largely on the complex that is built on top of the data
points. From a computational viewpoint, the most favored  
complexes for this purpose have so far been
Vietoris-Rips and witness complexes. While the Vietoris-Rips complex
is simple to compute and is a good vehicle 
for extracting topology of sampled spaces, 
its size is huge--particularly in high dimensions.
The witness complex on the other hand enjoys a smaller size because 
of a subsampling, but fails to capture the 
topology in high dimensions unless imposed with extra structures. 
We investigate a complex called the {\em graph induced complex}
that, to some extent, enjoys the advantages of both. It works on a subsample
but still retains the power of capturing the topology as the
Vietoris-Rips complex. It only needs a graph connecting
the original sample points from which it builds a 
complex on the subsample thus taming the
size considerably. We show that, using the graph induced complex one
can (i) infer the one dimensional homology of a manifold from a
very lean subsample, (ii) reconstruct a surface in three dimension
from a sparse subsample 
without computing Delaunay triangulations, (iii) infer the 
persistent homology groups of compact sets from a sufficiently
dense sample. We provide experimental evidences in support of
our theory. 
\end{abstract}

\newpage
\setcounter{page}{1}
\section{Introduction}
Acquiring knowledge about a sampled space from a point data has become
a key problem in many areas of science and engineering. The sampled space
could be a hidden manifold sitting in some high dimensions, or could be
a compact subset of some Euclidean space. 
Topological information such as the rank of the homology groups,
or their persistent behavior can divulge important features of
the hidden space. Therefore, a considerable effort has
ensued to extract topological information
from point data in recent years~\cite{CGOS09,CO08,DW11,Ghrist}. 
With the advent of advanced
technologies, the data is often generated in abundance. Mixed with the
burden of high dimensionality, large data sets pose a challenge to the 
processing resource. As a result, some recent investigations have
focused on how to use a lighter data structure or sparsify the input
which aids a faster computation, but still guarantees that the output
inference is correct.

Point data by themselves do not have interesting topology. So, a foremost
step in topology inference is to impose a structure such as a simplicial
complex onto it. The Delaunay, \v{C}ech, Vietoris-Rips, and witness complexes
are some of the most commonly proposed complexes 
for this purpose.
Among these, Vietoris-Rips (Rips in short) and witness complexes~\cite{CO08}
have been favored because they can be constructed
with simple computations.
Rips complexes are easy to construct as they can be built from
a graph by recognizing the cliques in it. However, the presence of
simplices corresponding to all cliques makes its size quite large.
Even in three dimensions with a few thousand points, the size of the
Rips complex can be an obstacle, if not a stopper, for further processing. 
Witness complexes, on the other hand, have too few simplices 
to capture the topology of the sampled space in dimensions three
or more~\cite{BGO09}. To tackle this issue, Boissonnat et al.~\cite{BGO09} 
suggested modifications to the original definition
of witness complex~\cite{SC04}. This
enlarges the witness complexes but 
makes it more complicated and costly to compute. 

\begin{figure}[htb]
\begin{center}
\includegraphics[height = 4.5cm]{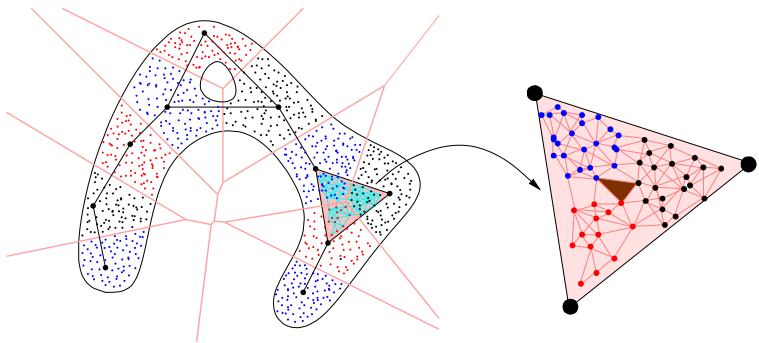}
\caption{A graph induced complex shown with bold vertices, edges,
 and a shaded triangle
on left. Input graph within the shaded triangle is shown on right. 
}
\label{fig:gic}
\end{center}
\end{figure}
\vspace*{-0.08in}
We investigate a new complex, a version of which was originally introduced in~\cite{Fang} for the application of sensor network routing. 
We set up a more general definition and call it the 
{\em graph induced complex}. 
We provide new theoretical understanding of the graph induced complex 
in terms of topology inference. 
In particular, we show that, when equipped with 
appropriate metric, this complex 
can decipher the topology from data. 
It retains the simplicity of the Rips complex as well as
the sparsity of the witness
complex. Its construction resembles
the sparsified Rips complex
proposed in~\cite{Sheehy} and also the combinatorial
Delaunay triangulation proposed in~\cite{CS03}, but it does
not build a Rips complex on the subsample and thus is sparser
than the Rips complex with the same set of vertices.
This fact makes a real difference in practice
as our preliminary experiments show. 
The idea of graph induced complex also bears similarity to the geodesic Delaunay triangulation which was proposed to recover the topology of a bounded planar region (with holes) from point samples \cite{GGOW10}. Our work extends it, as well as investigates its theoretical properties, to more general setting beyond the planar case. 

Given a graph $G$ on a point data $P$ equipped with a metric, 
one can build a graph induced
complex on a subsample $Q\subseteq P$ by throwing in
a simplex with a vertex set $V \subseteq Q$ if a set of points
in $P$, each being closest to exactly one vertex in $V$, forms a clique.
Figure~\ref{fig:gic} shows a graph induced complex for a point data
in the plane. Subsampled points are the darker vertices. Input points
are grouped according to the proximity to the subsampled vertices
(indicated with a Voronoi partition). The shaded triangle enlarged on right
is in the graph induced complex
since there is a $3$-clique in the input graph whose 3 vertices 
have 3 different closest point in the subsample.
Observe that, in this example, 
the graph induced complex has the same homology as the
sampled space. 

Figure~\ref{fig:comparison} shows
experimental results on two data sets, 40,000 sample points
from a Klein bottle in $\mathbb{R}^4$ and 15,000 sample points 
from the primary circle of natural image data 
considered in $\mathbb{R}^{25}$~\cite{AC09}.
The graphs connecting any two points within $\alpha=0.05$ unit distance
for Klein bottle and $\alpha=0.6$ unit distance for the primary circle
were taken as input for the graph induced complexes.
The $2$-skeleton of the Rips complexes for these $\alpha$ parameters
have $608,200$ and $1,329,672,867$ simplices respectively. These sizes
are too large to carry out fast computations.

\begin{figure*}[h!]
\begin{center}
\begin{tabular}{cc}
        \includegraphics[width=0.4\textwidth]{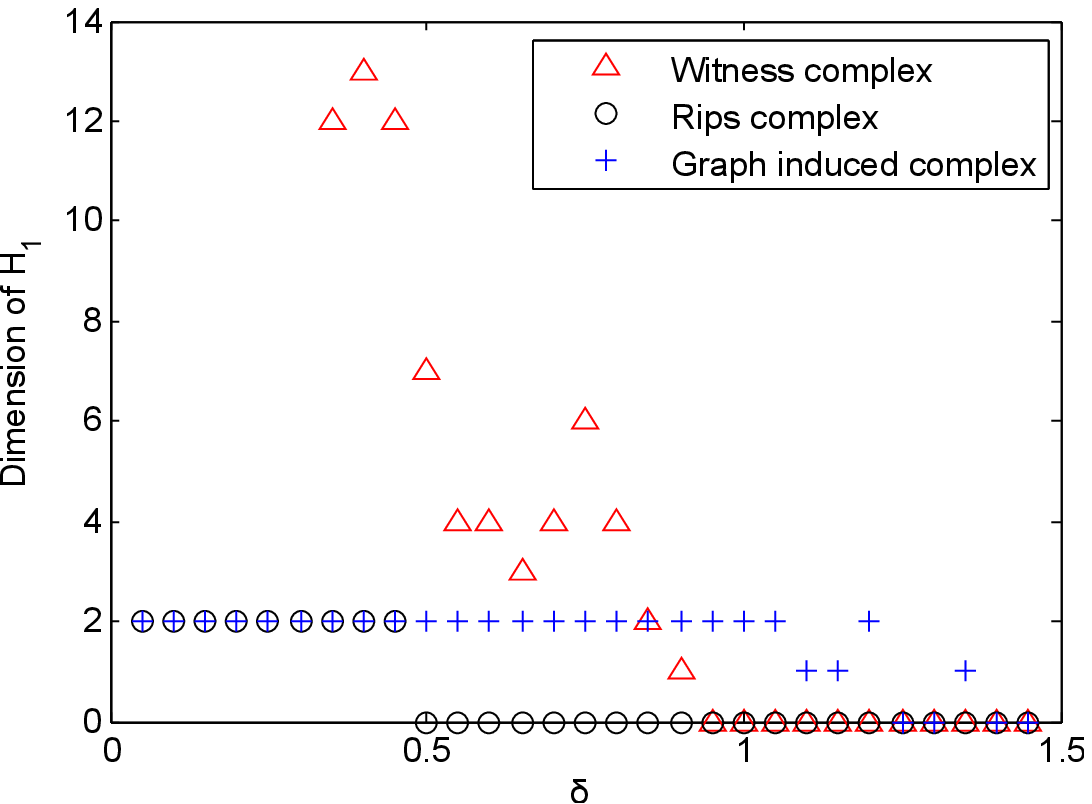}&
         \includegraphics[width=0.4\textwidth]{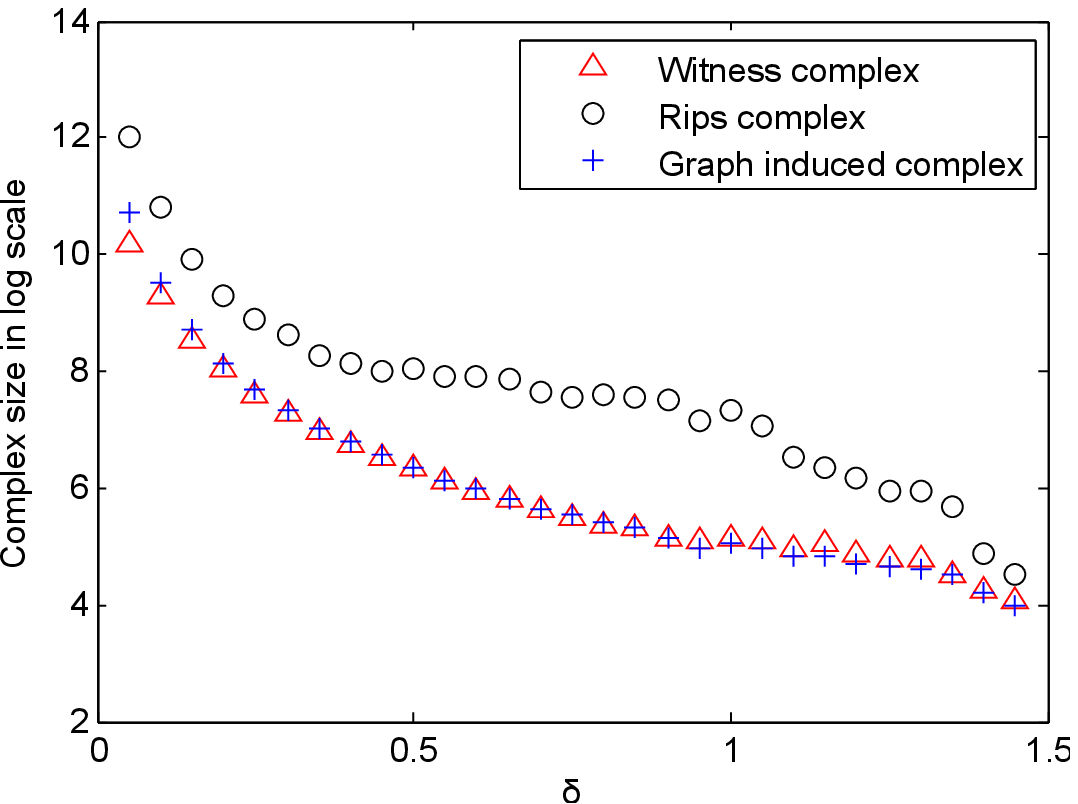}\\
        (a) & (b)\\
        \includegraphics[width=0.4\textwidth]{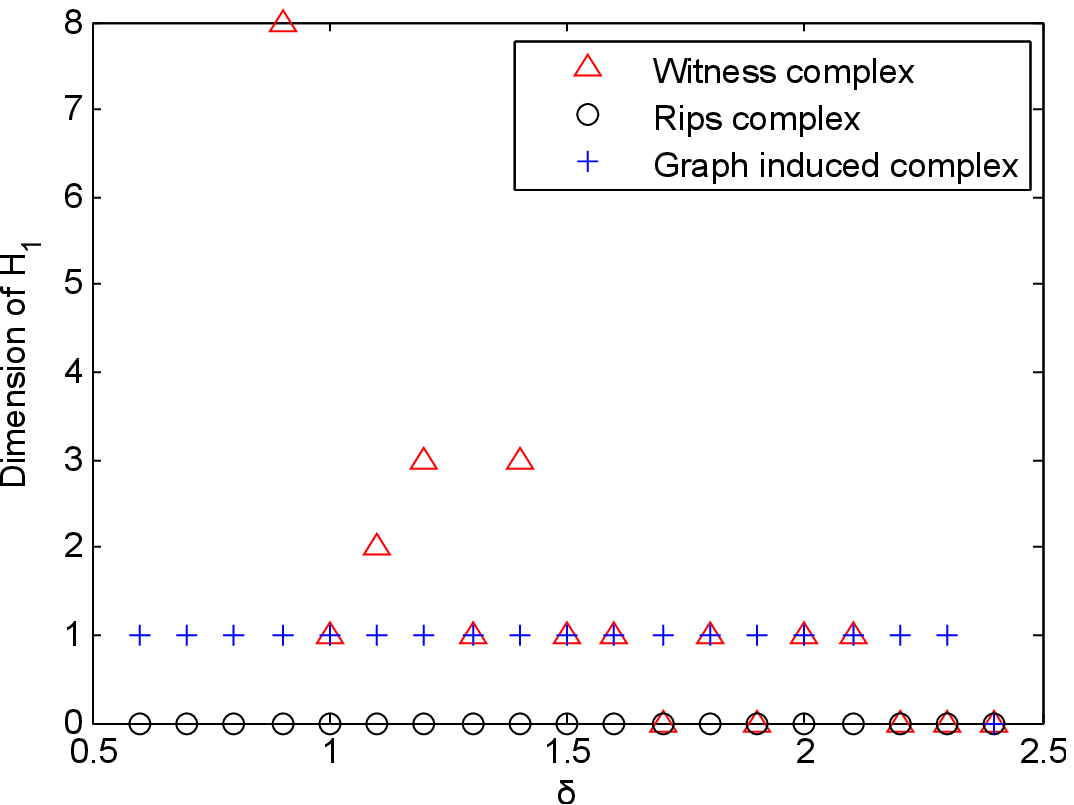}&
        \includegraphics[width=0.4\textwidth]{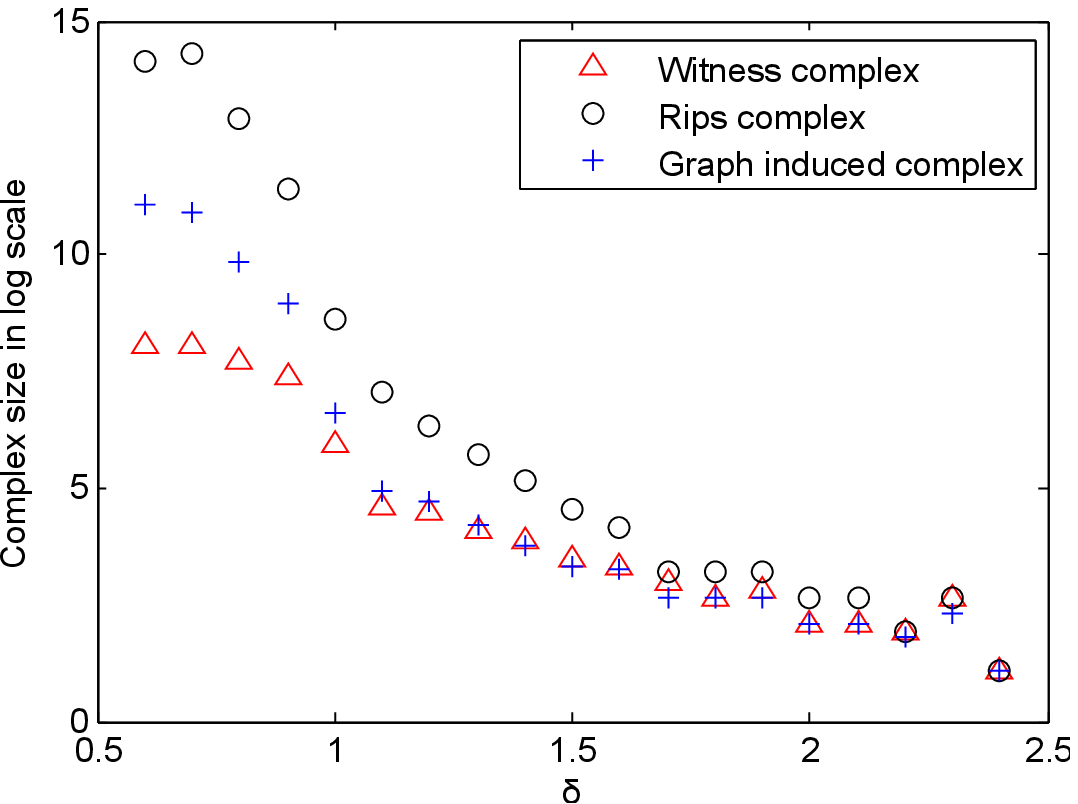}\\
        (c) & (d)
\end{tabular}
\end{center}
\caption{Comparison results for Klein bottle 
in $\mathbb{R}^{4}$ (top row) and primary circle in 
$\mathbb{R}^{25}$ (bottom row).
The estimated $\beta_1$ for three complexes are shown
on the left, and their sizes are shown on log scale on right. 
}
\label{fig:comparison}
\end{figure*}

For comparisons, we constructed the graph induced complex, sparsified Rips complex, and
the witness complex on the same subsample determined by a parameter
$\delta$. The parameter $\delta$ is also used in the graph 
induced complex (see definitions later) and the witness complex. The edges
in the Rips complex 
built on the same \emph{subsample} were of 
lengths at most $\alpha+2\delta$ (justified by Proposition~\ref{prop:contg}). 
We varied $\delta$ and observed the
rank of the one dimensional homology group ($\beta_1$).
As evident from the plots, the graph induced complex
captured $\beta_1$ correctly for a significantly
wider range of $\delta$ (left plots) while its size
remained comparable to that of the witness complex (right plots).
In some cases, the graph induced complex could capture the
correct $\beta_1$ with remarkably small number of simplices. 
For example, 
it had $\beta_1=2$ for Klein bottle when there
were $278$ simplices for $\delta=0.7$
and $154$ simplices for $\delta=1.0$. In both cases Rips
and witness complexes had wrong $\beta_1$ while the Rips complex
had a much larger size ($\log_e$ scale plot) and the witness complex
had comparable size.
This illustrates why the graph induced complex can be 
a better choice than the Rips and witness complexes. 

We establish three different results. First, we show that
the one-dimensional homology group of surfaces in three dimensions can be
determined by graph induced complexes. Even the surface itself can be reconstructed
with some post-processing from a sparse subsample
of a sample that could be excessively dense. Second, we show that,
for higher-dimensional manifolds, one-dimensional homology can still
be determined from graph induced complexes with a simple modification
of the metric. Finally, we extend our results to other homology groups where 
we show that the persistent homology groups of a pair of graph
induced complexes can determine the homology groups
of compact spaces. Experimental results support our theory.

\section{Graph induced complex and preliminaries}
First we define the graph induced complex in a more abstract
setting which does not require any metric.
\begin{definition} Let $G(V)$ be a graph with the
vertex set $V$ and let $\nu: V\rightarrow V'$ be a vertex
map where $\nu(V)=V'\subseteq V$.
The graph induced complex $\G(V,V',\nu)$ is defined
as the simplicial complex where a $k$-simplex 
$\sigma=\{v_1',v_2',\ldots,v_{k+1}'\}$
is in $\G(V,V',\nu)$ if and only if there exists a $(k+1)$-clique 
$\{v_1,v_2,\ldots,v_{k+1}\}\subseteq V$ so that $\nu(v_i)=v_i'$ for
each $i\in \{1,2,\ldots,k+1\}$.
To see that it is indeed a simplicial complex,
observe that a subset of a clique is also a clique.
\end{definition}

Now we specialize the graph induced complex to the case where
the vertices of the input graph comes from a metric space.

\begin{definition} A metric space $(X,d)$ is a tuple where
$X$ is a set and 
$d: X\times X \rightarrow \mathbb{R}_+$
is a distance function satisfying $d(x,y)\geq 0$,
$d(x,y)=0$ iff $x=y$, 
$d(x,y)=d(y,x)$, and $d(x,y)\leq d(x,z)+d(z,y)$.
\end{definition}

\begin{definition}
Let $(P,d)$ be a metric space where
$P$ is a finite point set and let $Q \subset P$ be a subset.
Let $\nu_d: P\rightarrow Q$ denote the nearest point map
where $\nu_d(p)$ is a point in $\argmin_{q\in Q} d(p,Q)$.
Given a graph $G(P)$ with $P$ as its vertex set, we
define its graph induced complex as
$\G(P,Q,d):=\G(P,Q,\nu_d)$.
\end{definition}


Among the many possible choices for $d$, we will  
focus on two cases where $d=d_E$, the Euclidean distance, and $d=d_G$, the
graph distance induced by the graph $G(P)$ assuming its edges have
non-negative weights. 
For any two vertices $p_1,p_2\in P$, the distance $d_G(p_1,p_2)$ is the
length of the shortest path between $p_1$ and $p_2$ in $G(P)$.
We will describe the choices of the distance functions 
as and when necessary.

In our case, the point set $P$ will be a discrete subset of
a compact smooth manifold $M\subset \mathbb{R}^n$ without boundary,
or simply of a compact set $X\subset \mathbb{R}^n$. 
The graph $G(P)$ will be the graph
$G^{\alpha}(P)=(P, E^{\alpha})$ 
where $(p_1,p_2)\in E^{\alpha}$ if and only if $\|p_1-p_2\|\leq \alpha$.
The graph induced complex induced by $G^{\alpha}(P)$ 
on a subset $Q\subseteq P$ under a distance function $d$
will be the focus of our
study. To emphasize the dependence on the parameter $\alpha$, 
we denote it as $\G^{\alpha}(P,Q,d)$.
One may draw a parallel between the graph induced complexes and
the well-known witness complexes~\cite{SC04} where
$P$ acts as a witness set $W$ and $Q$ acts as a landmark set $L\subseteq W$. 
However, the analogy does not extend any further since the construction
of the witness complex and its variants~\cite{BGO09} differs 
from that of the graph induced complex. For example, the original
witness complex defined in~\cite{SC04} embraces a $k$-simplex with vertex set
in $L$ only if its vertices belong to the $k$-nearest neighbors
of a point in $W$. In contrast the graph induced complex 
embraces a $k$-simplex only if its vertices have nearest
neighbors in $W$ that form a $k$-clique in
a graph built on the vertices belonging to $W$.
Similar to the witness
complexes, the graph induced complex builds upon a subsampling.
But, unlike witness complexes,
it enjoys some topological properties without any extra 
modifications such as weighting \cite{BGO09}.

\subsection{Sampling, homology, and sandwiching} 
As indicated before, the input point set
$P$ is a sample of a smooth manifold  $M$ or of a compact set $X$ embedded in
an Euclidean space. We will also subsample $P$ according to
a distance function $d$. Therefore, we define sampling in
a more general context. 

\begin{definition} A finite set $P\subseteq X$ 
is an $\eps$-sample of a metric space $(X,d)$, 
if for each point $x\in X$ there is
a point in $p\in P$ so that $d(x,p)\leq \eps$.
Additionally, $P$ is called $\delta$-sparse if $d(p_1,p_2)\geq \delta$
for any pair of points in $P$.  
\end{definition}

\noindent The point set $P$ does not have interesting topology by itself. 
We build
simplicial complexes using $P$ as the vertex set to infer the topology
of the sampled space $X$. Specifically, our goal is to infer
the homology groups of a manifold or a compact set from which $P$ is sampled
by computing the homology groups of a simplicial complex built with $P$
as vertices.  Let $\homo_r(\cdot)$ denote the $r$-dimensional homology group. It
refers to the singular homology when the argument is a manifold or 
a compact set, 
and to the simplicial homology when it is a simplicial complex. Also, all
homology groups are assumed to be defined over the finite field $\mathbb{Z}_2$. 

Our main tool for topological inference rests 
on the relations of the graph induced complexes to the Rips complexes 
that are known to capture
information about the homology groups of spaces~\cite{ALS11,Haus95}.
\begin{definition}
Given a point set $P\subseteq \mathbb{R}^n$ and a parameter $\alpha$, 
the Rips complex $\Rips^{\alpha}(P) =\Rips^{\alpha}(P,d_E)$ is a 
simplicial complex 
where a simplex $\sigma\in \Rips^{\alpha}(P)$ if and only if
all vertices of $\sigma$, drawn from  $P$, are within $\alpha$
Euclidean distance of each other. 
\end{definition}
Notice that we define Rips complexes with Euclidean distances
instead of general metrics
which will be assumed throughout this paper.
It is known that such Rips complexes capture the
topology of a manifold $M$ if the parameters are chosen right~\cite{ALS11,Haus95}. We utilize
this fact to infer $\homo_1(M)$ by exploiting a sandwiching property of 
graph induced complexes by Rips complexes. 
To prove this fact, we recall 
the concept of contiguous maps
from algebraic topology. Our main interest in this concept is the fact that
two contiguous maps between two simplicial
complexes induce the same homomorphism at the homology level.
\begin{definition}[\cite{Munkres}]
Let $\K_1$ and $\K_2$ be two simplicial complexes connected by two
simplicial maps $a: \K_1\rightarrow \K_2$ and $b:\K_1\rightarrow \K_2$.
We say $a$ and $b$ are contiguous, if and only if for any 
simplex $\sigma\in \K_1$, the simplices $a(\sigma)$ and $b(\sigma)$ are
faces of a common simplex in $\K_2$.
\end{definition}

\begin{fact}[\cite{Munkres}]
If $a:\K_1 \rightarrow \K_2$ and $b:\K_1\rightarrow \K_2$ are
contiguous, then the induced homomorphisms $a_*: \homo_r(\K_1)\rightarrow \homo_r(\K_2)$ 
and $b_*: \homo_r(\K_1)\rightarrow \homo_r(\K_2)$ are equal.
\end{fact}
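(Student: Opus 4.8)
\emph{Proof plan.} The strategy is the classical one: I would show that the chain maps $a_{\#},b_{\#}\colon C_{*}(\K_1)\to C_{*}(\K_2)$ induced by the two simplicial maps are chain homotopic, from which $a_{*}=b_{*}$ on $\homo_r$ is immediate --- if $\partial D+D\partial=a_{\#}-b_{\#}$, then $a_{\#}$ and $b_{\#}$ agree on cycles modulo boundaries (over $\mathbb{Z}_2$ there are no signs to track). So everything reduces to producing a chain homotopy $D$.

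The contiguity hypothesis is exactly what is needed to set up an \emph{acyclic carrier}. For a simplex $\sigma\in\K_1$ with vertex set $V(\sigma)$, contiguity supplies a simplex $\eta\in\K_2$ having both $a(\sigma)$ and $b(\sigma)$ as faces; hence $a(V(\sigma))\cup b(V(\sigma))\subseteq V(\eta)$ spans a simplex $\tau_\sigma\in\K_2$. Let $\Phi(\sigma)$ be the subcomplex of $\K_2$ consisting of $\tau_\sigma$ and all its faces. Two observations complete the set-up, and they are the only points requiring a check: (i) $\Phi$ is monotone, i.e. $\sigma'\subseteq\sigma$ forces $\tau_{\sigma'}\subseteq\tau_\sigma$ and hence $\Phi(\sigma')\subseteq\Phi(\sigma)$, since $a(V(\sigma'))\cup b(V(\sigma'))\subseteq a(V(\sigma))\cup b(V(\sigma))$; and (ii) each $\Phi(\sigma)$ is acyclic, being the closure of a single simplex and thus a cone. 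Finally, both $a_{\#}$ and $b_{\#}$ are carried by $\Phi$: $a_{\#}(\sigma)$ equals the chain $a(\sigma)$ (or $0$ if $a$ fails to be injective on the vertices of $\sigma$), which is supported in $\Phi(\sigma)$, and likewise for $b$.

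At this point one either invokes the Acyclic Carrier Theorem of \cite{Munkres} --- two augmentation-preserving chain maps carried by a common acyclic carrier are chain homotopic --- or constructs $D\colon C_r(\K_1)\to C_{r+1}(\K_2)$ by hand, inducting on $r$ and keeping $D$ carried by $\Phi$. One sets $D=0$ in degree $-1$; for a vertex $v$ one picks a $1$-chain $D(v)$ in the connected complex $\Phi(v)$ with $\partial D(v)=a_{\#}(v)-b_{\#}(v)$. For an $r$-simplex $\sigma$ with $r\ge 1$, the chain $z_\sigma=a_{\#}(\sigma)-b_{\#}(\sigma)-D(\partial\sigma)$ lies in $C_r(\Phi(\sigma))$ by monotonicity, and a one-line computation using $\partial a_{\#}=a_{\#}\partial$, $\partial b_{\#}=b_{\#}\partial$ together with the inductive identity shows $\partial z_\sigma=0$; acyclicity of $\Phi(\sigma)$ then yields a $D(\sigma)\in C_{r+1}(\Phi(\sigma))$ with $\partial D(\sigma)=z_\sigma$. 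Extending linearly gives $\partial D+D\partial=a_{\#}-b_{\#}$, hence $a_{*}=b_{*}$ on $\homo_r$. The only place that demands real care is this last induction --- verifying at each step that $z_\sigma$ is genuinely a cycle \emph{and} is supported inside the acyclic piece $\Phi(\sigma)$, i.e. the bookkeeping that keeps the construction ``carried by $\Phi$''; acyclicity of a cone, monotonicity of $\Phi$, and the passage from a chain homotopy to equality on homology are all formal, which is why the statement can simply be quoted from \cite{Munkres}.
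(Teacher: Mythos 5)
The paper does not prove this statement; it records it as a Fact and cites Munkres. Your acyclic-carrier argument is correct and is precisely the proof given in the cited reference (Munkres, Theorem 12.5 via the Acyclic Carrier Theorem), so it matches the intended source.
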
 

In our case two simplicial complexes will be $\K_1=\Rips^{\alpha}(P)$
and $\K_2=\Rips^{\beta}(P)$ for some $\beta > \alpha$. The map $a$ is
an inclusion $\Rips^{\alpha}(P)\hookrightarrow \Rips^{\beta}(P)$. 
For the map $b$, we consider  
a simplicial map 
$h: \Rips^{\alpha}(P)\rightarrow \G^{\alpha}(P,Q,d)$ which composed
with an inclusion $\G^{\alpha}(P,Q,d)\hookrightarrow \Rips^{\beta}(P)$ 
provides $b$. We elaborate on this construction.

The vertex sets of $\Rips^{\alpha}(P)$ and
$\G^{\alpha}(P,Q,d)$ are $P$ and $Q$ respectively with a vertex map
$\nu:P\rightarrow Q$ where $p\in P$ maps to one
of its closest point $\nu(p)\in Q$
with respect to the distance function $d$.
Observe that $G^{\alpha}(P)$ is the $1$-skeleton of
$\Rips^{\alpha}(P)$. Therefore, the edges of a $(k+1)$-clique in $G^{\alpha}(P)$
constitute the $1$-skeleton of a $k$-simplex 
in $\Rips^{\alpha}(P)$ and vice versa. 
The vertex map $\nu$ extends to a simplicial map 
$h:\Rips^{\alpha}(P)\rightarrow \G^{\alpha}(P,Q,d)$ where
a $k$-simplex $\{p_1,p_2,\cdots,p_{k+1}\}$ in $\Rips^{\alpha}(P)$
is mapped to a simplex (of dimension at most $k$) with the vertex set $\bigcup_i\{\nu(p_i)\}$.
To see that $h$ is well defined, observe that any subset
of the $(k+1)$-clique $\{p_1,p_2,\cdots,p_{k+1}\}$ is also a 
clique in $G^{\alpha}(P)$ and hence $\bigcup_i\{\nu(p_i)\}$ is
a simplex in $\G^{\alpha}(P,Q,d)$.
The following result is used later.

\begin{proposition} \label{prop:contg}
Let $(P,d)$ be a metric space where $P\subset \mathbb{R}^n$ 
is a finite set and for every pair $p_1,p_2\in P$, 
$d(p_1,p_2)$ is at least the
Euclidean distance $\|p_1-p_2\|$. Let $Q$ be a $\delta$-sample
of $(P,d)$. 
We have the sequence
\[
\Rips^\alpha(P) \stackrel{h}{\longrightarrow}  
\G^\alpha(P,Q,d)  \stackrel{j}{\hookrightarrow}  \Rips^{\alpha+2\delta}(P)
\]
where $j$ is an inclusion and $j\circ h$ is contiguous to the inclusion
$i: \Rips^{\alpha}(P) \hookrightarrow \Rips^{\alpha+2\delta}(P)$.
Hence, $j_* \circ h_* = i_*$.
\end{proposition}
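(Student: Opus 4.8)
The plan is to verify the two assertions of the proposition in turn: first that the inclusion $j: \G^\alpha(P,Q,d) \hookrightarrow \Rips^{\alpha+2\delta}(P)$ is well defined, and then that the composite $j\circ h$ is contiguous to the inclusion $i: \Rips^\alpha(P) \hookrightarrow \Rips^{\alpha+2\delta}(P)$. The equality $j_*\circ h_* = i_*$ then follows immediately from the cited Fact on contiguous maps, since $(j\circ h)_* = j_*\circ h_*$.

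For the first part, I would take a simplex $\sigma' = \{q_1,\dots,q_{k+1}\}$ of $\G^\alpha(P,Q,d)$ and show it is a simplex of $\Rips^{\alpha+2\delta}(P)$, i.e. that its vertices are pairwise within Euclidean distance $\alpha+2\delta$. By definition of the graph induced complex there is a $(k+1)$-clique $\{p_1,\dots,p_{k+1}\}$ in $G^\alpha(P)$ with $\nu(p_i)=q_i$. For any $i\ne l$, since $Q$ is a $\delta$-sample of $(P,d)$ and $q_i=\nu_d(p_i)$ is a $d$-nearest point of $p_i$ in $Q$, we have $d(p_i,q_i)\le \delta$ and likewise $d(p_l,q_l)\le\delta$; combined with $d(p_i,p_l)\le\alpha$ — which holds because $(p_i,p_l)\in E^\alpha$ means $\|p_i-p_l\|\le\alpha$ and the hypothesis $d\ge\|\cdot\|$ forces $d(p_i,p_l)$ to be... wait, actually I need the reverse: I need $d(p_i,p_l)\le\alpha$, but the hypothesis only gives $d\ge\|\cdot\|$. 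The correct route is to bound the \emph{Euclidean} distance $\|q_i-q_l\|$ directly: $\|q_i-q_l\|\le \|q_i-p_i\|+\|p_i-p_l\|+\|p_l-q_l\| \le d(q_i,p_i) + \alpha + d(p_l,q_l) \le \delta+\alpha+\delta = \alpha+2\delta$, using $\|p_i-p_l\|\le\alpha$ from the clique and the hypothesis $\|x-y\|\le d(x,y)$ to replace the two Euclidean terms $\|q_i-p_i\|,\|p_l-q_l\|$ by $d$-distances, each at most $\delta$. Hence $\sigma'\in\Rips^{\alpha+2\delta}(P)$ and $j$ is a legitimate inclusion of complexes.

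For the contiguity, let $\sigma=\{p_1,\dots,p_{k+1\}}$ be any simplex of $\Rips^\alpha(P)$. Then $i(\sigma)=\sigma$ itself, viewed in $\Rips^{\alpha+2\delta}(P)$, and $(j\circ h)(\sigma)$ is the simplex with vertex set $\bigcup_i\{\nu(p_i)\}$, viewed in $\Rips^{\alpha+2\delta}(P)$. To show these are faces of a common simplex, it suffices to show that $\tau := \{p_1,\dots,p_{k+1}\}\cup\bigcup_i\{\nu(p_i)\}$ is a simplex of $\Rips^{\alpha+2\delta}(P)$, since both $i(\sigma)$ and $(j\circ h)(\sigma)$ are subsets of $\tau$. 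So I must bound all pairwise Euclidean distances among the $p$'s and the $\nu(p_i)$'s by $\alpha+2\delta$: distances $\|p_i-p_l\|\le\alpha\le\alpha+2\delta$ hold since $\sigma\in\Rips^\alpha(P)$; distances $\|p_i-\nu(p_l)\| \le \|p_i-p_l\| + \|p_l-\nu(p_l)\| \le \alpha + d(p_l,\nu(p_l)) \le \alpha+\delta$; and distances $\|\nu(p_i)-\nu(p_l)\|\le\alpha+2\delta$ by exactly the computation in the previous paragraph. Hence $\tau\in\Rips^{\alpha+2\delta}(P)$, so $j\circ h$ and $i$ are contiguous, and the Fact gives $j_*\circ h_* = i_*$.

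The only subtle point — and the thing to be careful about rather than a genuine obstacle — is the direction of the metric inequality: the hypothesis is $d(p_1,p_2)\ge\|p_1-p_2\|$, so one must use $d$-bounds only where they are being used as \emph{upper} bounds on Euclidean distances (namely for the terms $\|p_i-\nu(p_i)\|\le d(p_i,\nu(p_i))\le\delta$ coming from $Q$ being a $\delta$-sample in the $d$-metric), and keep the edge/clique bounds $\|p_i-p_l\|\le\alpha$ in purely Euclidean form. Everything else is a routine triangle-inequality estimate, and the homological conclusion is a direct invocation of the stated Fact.
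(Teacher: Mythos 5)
Your proof is correct and follows essentially the same route as the paper: verify $j$ is a well-defined inclusion by bounding pairwise vertex distances by $\alpha+2\delta$, then establish contiguity by showing the union of $\sigma$ and $h(\sigma)$ spans a simplex in $\Rips^{\alpha+2\delta}(P)$ via the same three triangle-inequality estimates. In fact you handle one point more carefully than the paper's write-up: the paper nominally asserts $d(q_1,q_2)\le\alpha+2\delta$ before converting to the Euclidean bound, whereas you correctly run the triangle inequality entirely in the Euclidean metric (using $d$ only as an upper bound on the $\|p_i-\nu(p_i)\|$ terms), which is the right way to use the one-sided hypothesis $d\ge\|\cdot\|$.
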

\begin{proof}
The map $h$ is well-defined as we detailed before. We observe that
$\G^{\alpha}(P,Q,d) \subseteq \Rips^{\alpha+2\delta}(P)$ because
any edge $(q_1,q_2)$ of a simplex $\sigma\in \G^{\alpha}(P,Q,d)$ 
satisfies $d(q_1,q_2)\leq \alpha+2\delta$. Since $d(q_1,q_2)\geq \|q_1-q_2\|$
by assumption, the edge $(q_1,q_2)$ and hence the simplex $\sigma$ 
are in $\Rips^{\alpha+2\delta}(P)$. 
It follows that the inclusion map $j$ is well-defined.

To prove the contiguity, consider a simplex
$\sigma$ in $\Rips^{\alpha}(P)$. 
We need to show that the vertices of $\sigma$ and $h(\sigma)$ 
span a simplex in $\Rips^{\alpha+2\delta}(P)$.
Clearly, all vertices of $\sigma$ are within $\alpha$ distance of each other.
By definition of $h$,
all vertices of $h(\sigma)$ are within distance $\alpha+2\delta$.
Let $u$ be a vertex of $\sigma$ and $h(v)$ be a vertex of $h(\sigma)$ where
$v$ is a vertex of $\sigma$.
Then the Euclidean distance  $\|u-h(v)\|$ is
at most $\|u-v\| + \|v-h(v)\| \leq \alpha + \delta$.
Therefore, all vertices of $\sigma$ and
$h(\sigma)$ are within $\alpha+2\delta$ distance.
Hence, the simplex $\sigma$ and $h(\sigma)$ are faces
of a common simplex in $\Rips^{\alpha+2\delta}(Q)$ proving the claim
of contiguity.
\end{proof} 

One may wonder how to efficiently construct 
the graph induced complexes in practice. Our experiments show that the
following procedure runs quite efficiently in practice.
It takes advantage of computing 
nearest neighbors within a range and, more importantly,
computing cliques only in a sparsified graph.

Let the ball $B(q,\delta)$ in metric $d$
be called the $\delta$-cover for the point $q$.
A graph induced complex $\G^{\alpha}(P,Q,d)$ where $Q$ is
a $\delta$-sparse $\delta$-sample can be built easily by
identifying $\delta$-covers with a rather standard iterative algorithm
similar to the greedy (farthest point) iterative algorithm of~\cite{Gon85}. 
Let $Q_i=\{q_1,\ldots,q_i\}$ be the point set sampled so far 
from $P$. We maintain the invariants (i) $Q_i$ is $\delta$-sparse
and (ii) every point $p\in P$ that are in the union of $\delta$-covers
$\bigcup_{q\in Q_i}B(q,\delta)$ have their closest point 
$\nu(p)=\argmin_{q\in Q_i} d(p,q)$
in $Q_i$ identified.
To augment $Q_i$ to $Q_{i+1}=Q_i\cup\{q_{i+1}\}$, we choose
a point $q_{i+1}\in P$ that is outside 
the $\delta$-covers $\bigcup_{q\in Q_i} B(q,\delta)$.
Certainly, $q_{i+1}$ is at least $\delta$ units away from all
points in $Q_i$ thus satisfying the first invariant. For the
second invariant, we check every point $p$ in the $\delta$-cover of
$q_{i+1}$ and update $\nu(p)$ 
to be $q_{i+1}$ if its distance to $q_{i+1}$ is smaller than
the distance $d(p,\nu(p))$. At the end, we obtain a sample
$Q\subseteq P$ whose $\delta$-covers cover the entire point set $P$
and thus is a $\delta$-sample of $(P,d)$ which is also $\delta$-sparse.
Next, we construct the simplices of $\G^{\alpha}(P,Q,d)$.
This needs identifying cliques in $G^{\alpha}(P)$
that have vertices with different closest points in $Q$.
We delete every edge $pp'$ from $G^{\alpha}(P)$ where
$\nu(p)=\nu(p')$. Then, we determine every clique $\{p_1,\ldots p_k\}$
in the remaining sparsified graph and include the simplex
$\{\nu(p_1),\ldots,\nu(p_k)\}$ in $\G^{\alpha}(P,Q,d)$. 
The main saving here is that many cliques of the original
graph are removed before it is processed for clique computation.
We use the recently proposed
simplex tree which computes cliques efficiently both time
and space-wise~\cite{BM12}.

\section{Surface point data}
In this section, we infer the geometry and topology of a surface
through the graph induced complex. 
Let $M$ be a smooth, compact surface embedded in $\mathbb{R}^3$
that has no boundary. We assume that $M$ has positive reach $\rho=\rho(M)$
which is the minimum distance of $M$ 
to its medial axis.
Let $P$ be an $\eps$-sample of the metric
space $(M,d_E)$ where $d_E$ is the Euclidean distance.
Consider the graph induced complex $\G^{\alpha}(P,Q,d_E)$.
In this section, the subset $Q\subset P$ is assumed to
be a $\delta$-sparse $\delta$-sample of $(P,d_E)$.

Our result in this section is that under certain conditions on $\alpha$,
$\eps$ and $\delta$,
$\G^{\alpha}(P,Q,d_E)$ captures the 
homology of $M$ and contains the restricted Delaunay 
triangulation $\resdel{Q}{M}$ as defined below. 
The sparsity of $Q$ turns out to be a crucial condition in the argument.

\begin{definition}
Let $\del Q$ denote the Delaunay triangulation of a point set 
$Q\subset \mathbb{R}^3$.
The restricted Delaunay triangulation of $Q$ with respect
to a manifold $M\subset \mathbb{R}^3$, denoted $\resdel{Q}{M}$, is
defined to be the subcomplex of
$\del Q$ formed by all Delaunay simplices
whose Voronoi duals intersect $M$.
\end{definition}
 
\subsection{Topological inference from $\G^{\alpha}(P,Q,d_E)$}
\label{surface-topoinfer}
Consider the sequence
$
\Rips^\alpha(P) \stackrel{h}{\longrightarrow}  
\G^\alpha(P,Q,d)  \stackrel{j}{\hookrightarrow}  \Rips^{\alpha+2\delta}(P)
$
in Proposition~\ref{prop:contg}.
When $P$, an $\eps$-sample of $(M,d_E)$, is sufficiently dense, it is known
that $i_*: \homo_1(\Rips^{\alpha}(P))\rightarrow \homo_1(\Rips^{\beta}(P))$ 
is an isomorphism for appropriate $\alpha$ and $\beta$.
The homomorphism $h_*$ becomes injective 
if $i_*$ is an isomorphism since $i_*=j_*\circ h_*$. 
If we can show that $h_*$ is also surjective, then
$h_*$ becomes an isomorphism. We now show that $h_*$ is
indeed surjective for $\homo_1$-homology and hence
information about $\homo_1(M)$ can be obtained by 
computing $\homo_1(\G^{\alpha}(P,Q,d_E))$. 
First, we observe
the following. Let $P_q\subseteq P$ be the points who have $q\in Q$ as the
closest point. Notice that by the definitino of $h$, $h(P_q) = \{q\}$.
To prove that $h_*$ is surjective, it is sufficient
to prove that the preimage of
each 1-cycle in $\G^\alpha(P,Q,d_E)$ contains a
1-cycle of $\Rips^{\alpha}(P)$. This, in turn, is true if
the 1-skeleton of $\Rips^\alpha(P_q)$ is connected.

\begin{proposition}
If the $1$-skeleton of $\Rips^{\alpha}(P_q)$ is connected for all
$q\in Q$, then $h_*$ is surjective.
\label{surjectivity}
\end{proposition}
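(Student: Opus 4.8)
The plan is to show that every $1$-cycle $z$ in $\G^\alpha(P,Q,d_E)$ lies in the image of $h_*$ by explicitly lifting it to a $1$-cycle in $\Rips^\alpha(P)$ whose image under $h$ is homologous to $z$ (in fact equal to $z$ at the chain level, after a contiguity adjustment). Write $z=\sum_{t} (q_t' q_{t+1}')$ as a closed edge path in $\G^\alpha(P,Q,d_E)$. By the definition of the graph induced complex, each edge $(q_t' q_{t+1}')$ arises from some edge $(p_t, \hat p_{t+1})$ in $G^\alpha(P)$ with $\nu(p_t)=q_t'$ and $\nu(\hat p_{t+1})=q_{t+1}'$; here $p_t \in P_{q_t'}$ and $\hat p_{t+1}\in P_{q_{t+1}'}$. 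The obstacle is that the "arrival" vertex $\hat p_{t+1}$ coming from edge $t$ need not equal the "departure" vertex $p_{t+1}$ used for edge $t+1$, even though both lie in $P_{q_{t+1}'}$. This is exactly where the hypothesis enters.

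The key step is to use connectivity of the $1$-skeleton of $\Rips^\alpha(P_{q_{t+1}'})$ to bridge the gap: since $\hat p_{t+1}$ and $p_{t+1}$ both belong to $P_{q_{t+1}'}$, there is an edge path $\gamma_{t+1}$ from $\hat p_{t+1}$ to $p_{t+1}$ inside $\Rips^\alpha(P_{q_{t+1}'})\subseteq \Rips^\alpha(P)$. Splicing the edges $(p_t,\hat p_{t+1})$ together with these connecting paths $\gamma_{t+1}$ yields a closed edge path $\tilde z$ in $\Rips^\alpha(P)$, hence a $1$-cycle of $\Rips^\alpha(P)$ (one must check the chain sums close up, which is immediate since we concatenated a cyclic sequence of paths). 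Now apply $h$: every connecting path $\gamma_{t+1}$ lies entirely in $P_{q_{t+1}'}$, so $h(\gamma_{t+1})$ is the constant vertex $q_{t+1}'$ and contributes the zero $1$-chain; meanwhile $h(p_t,\hat p_{t+1}) = (q_t' q_{t+1}')$. Therefore $h_\#(\tilde z) = z$ at the level of simplicial $1$-chains, so certainly $h_*[\tilde z] = [z]$ in $\homo_1(\G^\alpha(P,Q,d_E))$. Since $z$ was an arbitrary $1$-cycle, $h_*$ is surjective on $\homo_1$.

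The main obstacle, as noted, is purely the mismatch of representative vertices within a single block $P_q$, and the hypothesis is tailored precisely to resolve it. A minor technical point to handle carefully is the degenerate case where an edge of $z$ collapses, i.e., where the clique in $G^\alpha(P)$ realizing $(q_t' q_{t+1}')$ has both endpoints with the same nearest neighbor — but by definition of an edge in $\G^\alpha(P,Q,d_E)$ we have $q_t'\neq q_{t+1}'$, so the two endpoints $p_t,\hat p_{t+1}$ genuinely lie in different blocks and no collapse occurs. One should also remark that the construction respects $\mathbb{Z}_2$ coefficients throughout, which is all we need since homology is taken over $\mathbb{Z}_2$; orientations play no role. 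This completes the argument modulo these routine checks.
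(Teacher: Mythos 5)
Your proof is correct and follows essentially the same approach as the paper: lift each edge of a $1$-cycle in $\G^\alpha(P,Q,d_E)$ to a preimage edge in $\Rips^\alpha(P)$, use connectivity of each $\Rips^\alpha(P_q)$ to splice a bridging path between the mismatched arrival and departure vertices within each block, and observe that these bridges collapse to constant vertices under $h_\#$, so the spliced cycle maps to the given one at the chain level.
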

\begin{proof}
We show that the chain map $h_\#$ induced by the simplicial
map $h$ is surjective. It follows that the homomorphism
$h_*$ induced at the homology level is also surjective.
Let $c=q_0q_1+ q_1q_2+\cdots+q_kq_0$ be any 
$1$-cycle in $\G^\alpha(P,Q,d_E)$. The edges
$q_{i-1}q_i$ and
$q_iq_{i+1}$ have edges, say $p_{i-1}p_i'$ and $p_ip_{i+1}'$ 
respectively, in their preimage under $h$ in $\Rips^{\alpha}(P)$.
Consider a path $\gamma_i$ between
$p_{i}$ and $p_i'$ in $\Rips^{\alpha}(P_{q_i})$ where
$h(p_{i})=h(p_{i}')=q_i$. Such a path exists because $\Rips^{\alpha}(P_q)$
is connected for all $q\in Q$. We have a $1$-cycle
$$
c'=p_0p_1' + \gamma_1 +p_1p_2'+\gamma_2+p_2p_3'+\cdots + 
\gamma_{k}+ p_{k}p_0'+\gamma_0
$$ 
in $\Rips^{\alpha}(P)$ so that $h_\#(c')=c$. 
This shows that $h_{*}$ is surjective in the first homology group.
\end{proof}

The $1$-skeleton of $\Rips^{\alpha}(P_q)$ is connected if the union
of balls $\B_q=\bigcup_{P_q}B(p,\frac{\alpha}{2})$ is connected because
an edge $p_1p_2$ is in $\Rips^{\alpha}(P_q)$ if the respective
balls $B(p_1,\frac{\alpha}{2})$ and $B(p_2,\frac{\alpha}{2})$
intersect. 
Let $V_q$ be the Voronoi cell of
$q$ in the Voronoi diagram $\vor{Q}$. Let 
$M_q = V_q \cap M$ be the restricted Voronoi region. 
It turns out (we will prove it later in Proposition~\ref{Rips-connect}) that if $M_q$ is contained in $\B_q$ and $M_q$ is connected, then $\B_q$ is connected. 
It may seem a priori that $\B_q$ would contain $M_q$ if 
$P$ is a dense sample. Unfortunately, that is not
true as Figure~\ref{fig:disconnect_skeleton} illustrates. To avoid
such a case, we require that the Voronoi cells
do not subtend very small angles between their facets
which is ensured by the $\delta$-sparsity of $Q$. 
Proposition~\ref{LEMMA:FIND-POINT} below 
uses $\delta$-sparsity in a subtle way to 
prepare for the proof that $\B_q$ contains $M_q$.
This result will also be used later to show that the graph induced complex $\G^\alpha(P,Q,d_E)$ in fact contains the restricted Delaunay triangulation $\resdel{Q}{M}$. 

For a simplex $\sigma\in \resdel{Q}{M}$, we call
a ball $B(c,r)$ a {\em surface Delaunay ball} of $\sigma$ if
$c\in M$ and its boundary contains the vertices of $\sigma$. 

\begin{figure*}[tbhp]
\begin{center}
\includegraphics[height = 3.8cm]{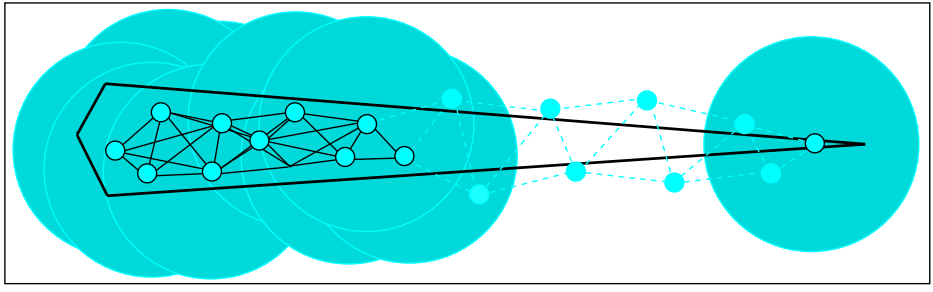}
\caption{In a long thin Voronoi cell, $\B_q$ may be
disconnected and may not contain
$M_q$.}
\label{fig:disconnect_skeleton}
\end{center}
\end{figure*}

\begin{proposition} \label{LEMMA:FIND-POINT}
Let $P$ be an $\eps$-sample of $(M,d_E)$, 
and $Q$ a $\delta$-sparse $\delta$-sample of $(P,d_E)$.
Let $\sigma\in \resdel{Q}{M}$ be a restricted Delaunay triangle or edge 
with a vertex $q \in Q$.
Let $c$ be the center of a surface Delaunay ball of $\sigma$.
If $8\eps \leq \delta \leq \frac{2}{27}\rho(M)$,
then there is a point $p\in P$ so that $p \in B(c, 4\eps)$ and 
$q$ is the closest point to $p$ among all points in $Q$.
\end{proposition}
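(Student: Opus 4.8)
The plan is to produce a point $x\in M$ with $\|x-c\|\le 3\eps$ whose $\eps$-ball lies entirely inside the Voronoi cell $V_q$ of $q$; then, since $P$ is an $\eps$-sample of $(M,d_E)$ and $x\in M$, some $p\in P$ satisfies $\|p-x\|\le\eps$, and this $p$ lies in $B(x,\eps)\subseteq V_q$ (so $q$ is its nearest site in $Q$) and in $B(c,4\eps)$, which is exactly what is asked.

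First I would record the size of $r:=\|c-q\|$. The empty-ball property of the surface Delaunay ball $B(c,r)$ gives $\|c-q'\|\ge r$ for every $q'\in Q$ and $r=d(c,Q)$. Since $Q$ is a $\delta$-sample of $(P,d_E)$ and $P$ an $\eps$-sample of $(M,d_E)$, $Q$ is a $(\delta+\eps)$-sample of $M$, so $r\le\delta+\eps\le\tfrac98\delta$ (using $\eps\le\delta/8$); and since $\sigma$ has at least two vertices, both lying on $\partial B(c,r)$ and at mutual distance $\ge\delta$ by $\delta$-sparsity, $r\ge\delta/2\ge4\eps$. Together with $\delta\le\tfrac{2}{27}\rho$ this gives $\rho\ge 12r$, which is the slack the reach provides.

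Now move from $c$ toward $q$. Let $u=(q-c)/r$, pick $t=\tfrac{27}{10}\eps$ (note $t<r$), set $c'=c+tu$, and let $x$ be the nearest point of $M$ to $c'$, well defined since $d(c',M)\le t<\rho$. Splitting $u=u_\top+u_\perp$ into its tangential and normal parts at $c$, the standard consequences of $M$ having reach $\rho$ give $\|u_\perp\|\le r/(2\rho)$ and $d(c+tu_\top,M)\le t^2/(2\rho)$, hence
\[
\|x-c'\|=d(c',M)\le t\|u_\perp\|+\frac{t^2}{2\rho}\le\frac{t(r+t)}{2\rho}\le\frac{\eps}{6},
\]
so $\|x-c\|\le t+\eps/6\le 3\eps$. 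For a site $q'\in Q\setminus\{q\}$, let $\mathrm{sd}(\cdot)$ denote signed distance to the bisecting plane of $q$ and $q'$, positive on the $q$-side. Expanding and using $\|c-q\|=r$, $\|c-q'\|\ge r$, $\|q-q'\|\ge\delta$ one obtains
\[
\mathrm{sd}(c')=\frac{\|c'-q'\|^2-\|c'-q\|^2}{2\|q-q'\|}\ \ge\ \frac{t\|q-q'\|}{2r}\ \ge\ \frac{t\delta}{2r}\ \ge\ \frac{4t}{9},
\]
where the first inequality drops the nonnegative term $(\|c-q'\|^2-r^2)(1-t/r)$, nonnegative precisely because $B(c,r)$ is empty. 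Since signed distance to a plane is $1$-Lipschitz, $\mathrm{sd}(x)\ge\tfrac{4t}{9}-\|x-c'\|\ge\tfrac{4}{9}\cdot\tfrac{27}{10}\eps-\tfrac{\eps}{6}>\eps$. Hence $B(x,\eps)$ lies strictly on the $q$-side of the bisector of $q$ with every other site of $Q$, i.e.\ $B(x,\eps)\subseteq V_q$.

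It only remains to invoke $\eps$-density of $P$ in $(M,d_E)$: there is $p\in P$ with $\|p-x\|\le\eps$, so $p\in B(x,\eps)\subseteq V_q$, whence $q$ is the (unique) closest point of $Q$ to $p$, and $\|p-c\|\le\|p-x\|+\|x-c\|\le\eps+3\eps=4\eps$. I expect the main obstacle to be the numerology rather than any conceptual difficulty: moving too little leaves $\mathrm{sd}(x)$ below the $\eps$ needed to force $B(x,\eps)\subseteq V_q$, while moving too far pushes $p$ out of $B(c,4\eps)$, so $t$ must be chosen in a narrow band; and one must verify that the lower bound on $\mathrm{sd}(c')$ holds for every site of $Q$, not merely the vertices of $\sigma$ — which is exactly where the emptiness of $B(c,r)$ and the $\delta$-sparsity of $Q$ both enter.
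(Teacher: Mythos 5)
Your proof is correct and arrives at the same constants, but takes a genuinely different route from the paper's. Both arguments move from $c$ along $cq$ and then project onto $M$ before invoking $\eps$-density of $P$; where they diverge is in certifying that the resulting sample point has $q$ as its nearest site in $Q$. The paper establishes a separate cone lemma (any point of $B(c,r)$ inside a cone of aperture $2\alpha$ about $\overrightarrow{cq}$ is closer to $q$ than to any boundary point of $B(c,r)$ outside the cone of aperture $4\alpha$), then works in the 2D plane through $q$ and the two tangent-ball centers to produce a point $w$ on $cq$ whose small ball must intersect $M$, and finally exhibits an explicit window of admissible shift lengths $\ell$. You bypass the cone machinery entirely: you lower-bound the signed distance from $c'=c+tu$ to the bisector of $q$ and an arbitrary $q'\in Q$ by expanding the identity $\mathrm{sd}(c')=\bigl(\|c'-q'\|^2-\|c'-q\|^2\bigr)/\bigl(2\|q-q'\|\bigr)$, drop the nonnegative term supplied by emptiness of $B(c,r)$, and transfer the bound to the projection $x$ via the $1$-Lipschitz property of signed distance, which gives $B(x,\eps)\subseteq V_q$ outright. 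Your route is more elementary and reads verbatim in $\mathbb{R}^n$. One detail worth noting: the bound $d(c+tu_\top,M)\le t^2/(2\rho)$ is a small-$t$ approximation of the exact reach estimate $\rho-\sqrt{\rho^2-\|tu_\top\|^2}$, which is in fact slightly larger than $t^2/(2\rho)$; however, here $t\le\rho/40$, so the overshoot is under $0.1\%$ and is comfortably absorbed by the roughly $10\%$ slack in your numerics (even the cruder but always-valid bound $t^2/\rho$ still yields $\mathrm{sd}(x)\ge 1.02\eps>\eps$ and $\|x-c\|<3\eps$).
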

\begin{proof}
See Appendix \ref{appendix:find-point} for the proof. 
\end{proof}

Now, we are ready to prove that $M_q$ is contained in the union
of balls $\bigcup_{P_q}B(p,\frac{\alpha}{2})$.

\begin{proposition}
If $\alpha \geq 12\eps$$ and 8\eps \leq \delta \leq \frac{2}{27}\rho(M)$, then
$M_q \subset \bigcup_{\{p\in P_q\}}B(p, \frac{\alpha}{2})$
which implies that $\Rips^{\alpha}(P_q)$ is path connected
if $M_q$ is path connected.
\label{Rips-connect}
\end{proposition}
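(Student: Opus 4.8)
The plan is to show the claimed containment $M_q \subset \bigcup_{p\in P_q} B(p,\frac{\alpha}{2})$ pointwise: pick an arbitrary $x \in M_q$ and produce a point $p \in P_q$ with $\|x-p\| \le \frac{\alpha}{2}$. Since $\alpha \ge 12\eps$ gives $\frac{\alpha}{2} \ge 6\eps$, it suffices to find $p \in P_q$ within $6\eps$ of $x$. The natural first move is to use that $P$ is an $\eps$-sample of $(M,d_E)$, so there is some $p_0 \in P$ with $\|x - p_0\| \le \eps$; but there is no reason $p_0$ lies in $P_q$, i.e.\ no reason $q$ is the closest site to $p_0$. So the real content is a ``steering'' argument that replaces $p_0$ by a nearby sample point whose nearest site in $Q$ is exactly $q$.

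The key idea is to exploit that $x \in M_q = V_q \cap M$, so $q$ is (one of) the closest point(s) of $Q$ to $x$, and then invoke Proposition~\ref{LEMMA:FIND-POINT}. Concretely, I would argue as follows. Since $x \in M$ and $q$ is a nearest site of $Q$ to $x$, consider the ball centered at $x$ through $q$; because $Q$ is a $\delta$-sample of $(P,d_E)$ and $P$ is $\eps$-dense in $M$ with $8\eps \le \delta$, one can locate (a small perturbation of) $x$ as the center $c$ of a surface Delaunay ball of a restricted Delaunay simplex $\sigma \in \resdel{Q}{M}$ having $q$ as a vertex --- intuitively, walking along $M$ from $x$ one reaches a point equidistant to $q$ and to the nearest competing site, and the lowest-dimensional restricted Delaunay face incident to $q$ near $x$ has a surface Delaunay ball whose center is within a constant times $\eps$ of $x$. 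Once $\sigma$ and its surface Delaunay ball center $c$ are in hand with $\|x-c\|$ bounded by a small multiple of $\eps$, Proposition~\ref{LEMMA:FIND-POINT} hands us a point $p \in P$ with $p \in B(c,4\eps)$ and $\nu_{d_E}(p) = q$, i.e.\ $p \in P_q$. Then $\|x-p\| \le \|x-c\| + \|c-p\| \le \|x-c\| + 4\eps$, and keeping the bound on $\|x-c\|$ below $2\eps$ yields $\|x-p\| \le 6\eps \le \frac{\alpha}{2}$, as needed. The hypotheses $8\eps \le \delta \le \frac{2}{27}\rho(M)$ are exactly what Proposition~\ref{LEMMA:FIND-POINT} requires, and they also guarantee (via standard sampling theory for surfaces with reach $\rho$) that restricted Voronoi/Delaunay structure behaves well enough for the steering step.

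Having shown $M_q \subset \bigcup_{p\in P_q} B(p,\frac{\alpha}{2}) = \B_q$, the second assertion is immediate from the discussion preceding the proposition together with Proposition~\ref{Rips-connect}'s own forward reference: an edge $p_1p_2$ lies in $\Rips^{\alpha}(P_q)$ whenever $B(p_1,\frac{\alpha}{2}) \cap B(p_2,\frac{\alpha}{2}) \ne \emptyset$, so the nerve-style argument gives that the $1$-skeleton of $\Rips^{\alpha}(P_q)$ is connected whenever $\B_q$ is connected; and if $M_q$ is path connected and contained in $\B_q$, then $\B_q$ is connected (this is the content of the other half of Proposition~\ref{Rips-connect}, or a direct covering argument: a path in $M_q$ passes through a chain of the balls $B(p,\frac{\alpha}{2})$ with consecutive ones overlapping).

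I expect the main obstacle to be the steering step --- turning ``$x \in M_q$'' into ``there is a restricted Delaunay simplex $\sigma \ni q$ with a surface Delaunay ball centered within $2\eps$ of $x$.'' This is where the $\delta$-sparsity must be used in the subtle way the text alludes to: without a lower bound on $\delta$, the Voronoi cell $V_q$ could be a long thin sliver (cf.\ Figure~\ref{fig:disconnect_skeleton}), the competing sites could be pathologically placed, and the surface Delaunay ball center could drift far from $x$, breaking the $6\eps$ budget. Making the constants work out (so that the slack between $6\eps$ and $\frac{\alpha}{2} = 6\eps$ is not actually negative) will require care; I would keep the intermediate bound on $\|x-c\|$ as tight as possible, ideally $\|x-c\| \le 2\eps$, and lean on Proposition~\ref{LEMMA:FIND-POINT} for the $4\eps$ half of the estimate. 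The rest is routine metric bookkeeping.
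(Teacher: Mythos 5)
Your outline is right — cover $M_q$ by $\frac{\alpha}{2}$-balls around $P_q$ via the $\eps$-sampling of $P$ and Proposition~\ref{LEMMA:FIND-POINT} — but the ``steering'' step you flag as the main obstacle is a genuine gap, and it cannot be closed as stated. You want, for \emph{every} $x\in M_q$, a restricted Delaunay edge or triangle incident to $q$ whose surface Delaunay ball is centered at some $c$ with $\|x-c\|\le 2\eps$. That is false in general: such centers lie on $\partial M_q$ (Voronoi facets of $V_q$ restricted to $M$), so when $x$ sits near the ``middle'' of $M_q$ its distance to the nearest such center scales with $\delta$, and the hypothesis $\delta\ge 8\eps$ means this can far exceed $2\eps$. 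Since $\alpha\ge 12\eps$ gives exactly $\alpha/2=6\eps$ with no slack, the uniform version of your bound has nowhere to hide.

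The fix is a case split that uses Proposition~\ref{LEMMA:FIND-POINT} only where it is both needed and valid. Take $p_0\in P$ with $\|x-p_0\|\le\eps$. If $p_0\in P_q$ you are already done (indeed $\eps\le\alpha/2$, no Delaunay balls needed). If $p_0\notin P_q$, then $p_0\notin M_q$, yet $B(p_0,\eps)\cap M$ is a connected patch (as $\eps\ll\rho$) meeting $M_q$, so it contains some $x'\in\partial M_q$ with $\|p_0-x'\|\le\eps$. Such an $x'$ lies on a Voronoi facet, so $B(x',\|q-x'\|)$ is literally a surface Delaunay ball of a restricted Delaunay edge or triangle with vertex $q$ — no perturbation and no bound on the distance from $x$ to a Delaunay center is required. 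Proposition~\ref{LEMMA:FIND-POINT} applied \emph{at $x'$} gives $u\in P_q$ with $\|u-x'\|\le 4\eps$, and then $\|x-u\|\le\|x-p_0\|+\|p_0-x'\|+\|x'-u\|\le\eps+\eps+4\eps=6\eps\le\alpha/2$. The point is that the surface-Delaunay-ball argument is only invoked when $p_0$ lands on the wrong side of $\partial M_q$, and in that regime $x$ is automatically within $2\eps$ of $\partial M_q$, so the constants close. Your treatment of the second half of the statement (connectivity of $\Rips^{\alpha}(P_q)$ from connectivity of $\bigcup_{p\in P_q}B(p,\frac{\alpha}{2})$ containing $M_q$) is fine.
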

\begin{proof}
Since $P$ is an $\eps$-sample of $M$,
$\forall x \in M$, there exists a point $p \in B(x, \eps)$ where $p \in P$.  
Let $P_q^{'} = P \cap (\bigcup_{x\in M_q}B(x, \eps))$. 
Then, we have $M_q \subset \bigcup_{p\in P_q^{'}}B(p, \eps)$ for if $x\in M_q$,
there exists $p\in P_q'$ with $p\in B(x,\eps)$ requiring $x\in B(p,\eps)$. 
On the other hand, recall that $P_q = M_q \cap P$. 
Hence if $p \in P_q^{'} \setminus P_q$, then $B(p, \eps)$ contains some boundary 
point $x\in \partial M_q$. The point $x$ belongs to a Voronoi facet
in the Voronoi diagram of $Q$ and hence 
$B(x, \|q-x\|)$ is a surface Delaunay ball. 
By Proposition~\ref{LEMMA:FIND-POINT}, 
we can find a point $u \in P_q$ such that $\|u-x\| \leq 4\eps$. 
Thus, $B(p,\eps) \subset B(u, (4+2)\eps)$. 
Taking $\alpha \geq 12\eps$, 
we get that $M_q \subset \bigcup_{p\in P_q^{'}}B(p, \eps) \subset \bigcup_{p\in P_q}B(p, \frac{\alpha}{2}) $. 

Since every ball in $\{B_p|p\in P_q\}$ intersects $M_q$,
we have that
$\B_q = \bigcup_{\{p\in P_q\}}B(p, \frac{\alpha}{2})$ is path connected
if $M_q$ is path connected.
On the other hand $\Rips^{\alpha}(P_q)$ is path connected if
$\B_q$ is path connected proving the claim.
\end{proof}

We can now present the main result of this subsection.

\begin{theorem}

Let $P$ be an $\eps$-sample of a smooth compact surface $M$ embedded in $\reals^3$, and $Q \subseteq P$ a $\delta$-sparse $\delta$-sample of $(P, d_E)$. 
For $12\eps \leq \alpha \leq \frac{2}{27}\rho$
and $8\eps \leq \delta \leq \frac{2}{27}\rho$, the map 
$h_*: \homo_1(\Rips^{\alpha}(P))\rightarrow \homo_1(\G^{\alpha}(P,Q,d_E))$ 
is an isomorphism where $h:\Rips^{\alpha}(P)\rightarrow \G^{\alpha}(P,Q,d_E)$
is the simplicial map induced by the nearest point map $\nu_{d_E}: P\rightarrow Q$.
\label{Surface-iso}
\end{theorem}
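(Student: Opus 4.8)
The plan is to assemble the theorem from the three preceding results.  From Proposition~\ref{prop:contg} we already have the factorization $i_* = j_* \circ h_*$ where $h_*: \homo_1(\Rips^{\alpha}(P)) \rightarrow \homo_1(\G^{\alpha}(P,Q,d_E))$ and $j_*: \homo_1(\G^{\alpha}(P,Q,d_E)) \rightarrow \homo_1(\Rips^{\alpha+2\delta}(P))$, with $i_*: \homo_1(\Rips^{\alpha}(P)) \rightarrow \homo_1(\Rips^{\alpha+2\delta}(P))$ the inclusion-induced map.  So the proof has two halves: injectivity of $h_*$ and surjectivity of $h_*$.

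For injectivity, I would invoke the known interleaving/persistence result for Rips complexes of a dense sample of a smooth compact surface (the results of~\cite{ALS11,Haus95} cited in the excerpt): for the stated range $12\eps \leq \alpha \leq \frac{2}{27}\rho$ — and since $\alpha + 2\delta \leq \alpha + \frac{4}{27}\rho$ stays below the threshold governed by $\rho$ — the inclusion $i: \Rips^{\alpha}(P) \hookrightarrow \Rips^{\alpha+2\delta}(P)$ induces an isomorphism on $\homo_1$.  (This is the step where I need to be slightly careful that the numerology $12\eps \le \alpha$, $\alpha+2\delta \le \frac{2}{27}\rho + \frac{4}{27}\rho = \frac{6}{27}\rho$ actually lands in the interval where the cited theorems guarantee $\homo_1(\Rips)\cong\homo_1(M)$ for both parameters; I would cite the precise constants from~\cite{ALS11} and, if needed, absorb them into the hypotheses.)  Once $i_*$ is an isomorphism, $i_* = j_* \circ h_*$ forces $h_*$ to be injective.

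For surjectivity, I would apply Proposition~\ref{surjectivity}: it suffices to show that the $1$-skeleton of $\Rips^{\alpha}(P_q)$ is path connected for every $q \in Q$, where $P_q$ is the set of sample points whose nearest point in $Q$ is $q$.  Proposition~\ref{Rips-connect} gives exactly this, provided $\alpha \geq 12\eps$ and $8\eps \leq \delta \leq \frac{2}{27}\rho$ (both in our hypotheses) \emph{and} provided the restricted Voronoi region $M_q = V_q \cap M$ is path connected.  So the remaining gap is the connectedness of $M_q$: I would argue this from the sparsity of $Q$ and the positive reach of $M$ — a $\delta$-sparse, $\delta$-sample $Q$ with $\delta \leq \frac{2}{27}\rho$ forces each Voronoi cell $V_q$ to be "fat" enough and small enough (diameter $O(\delta) \ll \rho$) that its intersection with $M$ is a topological disk, hence connected; this is a standard consequence of the restricted-Delaunay / closed-ball-property machinery for surfaces of positive reach, and I would cite it rather than reprove it.  Combining: $h_*$ is injective and surjective, hence an isomorphism.

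The main obstacle I anticipate is not any single hard lemma but the bookkeeping of constants: verifying that the one interval $[12\eps, \frac{2}{27}\rho]$ for $\alpha$ simultaneously (a) makes $i_*$ an isomorphism on $\homo_1$ for \emph{both} scales $\alpha$ and $\alpha+2\delta$ via the cited Rips-recovery theorems, (b) satisfies $\alpha \geq 12\eps$ for Proposition~\ref{Rips-connect}, and (c) is compatible with the separate constraint $8\eps \leq \delta \leq \frac{2}{27}\rho$ needed by Propositions~\ref{LEMMA:FIND-POINT} and~\ref{Rips-connect}; and, within that, cleanly establishing that $M_q$ is connected.  Everything else is a direct concatenation of results already in hand.
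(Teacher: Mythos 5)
Your proposal follows the same architecture as the paper's proof — factor $i_* = j_* \circ h_*$, extract injectivity of $h_*$ from $i_*$ being an isomorphism, and obtain surjectivity from Propositions~\ref{surjectivity} and~\ref{Rips-connect} plus the connectedness of each restricted Voronoi region $M_q$ (which the paper, like you, simply cites, via~\cite{Dey}, using $\delta \le 0.18\rho$). The surjectivity half of your argument is essentially identical to the paper's.

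The gap is in the injectivity half, and you have slightly misdiagnosed where the difficulty lies. You treat ``both $\Rips^\alpha(P)$ and $\Rips^{\alpha+2\delta}(P)$ have the correct $\homo_1$'' as sufficient to conclude that the \emph{inclusion-induced} map $i_*$ between them is an isomorphism, and you frame the remaining work as checking that both scales fall below a $\rho$-threshold. But two finite-dimensional $\mathbb{Z}_2$-vector spaces of equal dimension can easily admit a non-isomorphic linear map between them; knowing both groups are abstractly $\homo_1(M)$ does not make $i_*$ an isomorphism. The Rips-recovery result the paper actually invokes (Proposition~4.1 of~\cite{DW11}) guarantees that an inclusion $\Rips^r(P) \hookrightarrow \Rips^{r'}(P)$ induces an isomorphism only under a doubling condition $r' \ge 2r$ (with $4\eps \le r$ and $r' \le \sqrt{3/5}\,\rho$). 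In your setting $r = \alpha$ and $r' = \alpha + 2\delta$, the doubling condition $\alpha + 2\delta \ge 2\alpha$ fails whenever $\delta < \alpha/2$, which the hypotheses ($8\eps \le \delta$, $12\eps \le \alpha$) certainly permit. The paper fills this gap with a four-term sandwich: it embeds $i$ in the chain $\Rips^{\delta}(P) \hookrightarrow \Rips^{\alpha}(P) \hookrightarrow \Rips^{\alpha+2\delta}(P) \hookrightarrow \Rips^{3\alpha}(P)$, notes that the long inclusions $\Rips^{\delta} \hookrightarrow \Rips^{\alpha+2\delta}$ and $\Rips^{\alpha} \hookrightarrow \Rips^{3\alpha}$ each satisfy the doubling condition and hence induce isomorphisms, and deduces that the middle map $i_*$ is both surjective and injective. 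This sandwich is the one genuinely new step in the theorem's proof and cannot be ``absorbed into the hypotheses'' by tweaking constants, since $\delta$ and $\alpha$ are independent parameters and the statement must cover the case $\delta < \alpha/2$.
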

\begin{proof}
Since $\delta \leq 0.18\rho$, we can assume each restricted Voronoi
cell $M_{q}$ to be path connected~\cite{Dey}. This together with
the lower bound on $\alpha$ imply that $\Rips^{\alpha}(P_q)$
is connected for each $q\in Q$ thanks to Proposition~\ref{Rips-connect}.
Consequently, Proposition~\ref{surjectivity} 
establishes that $h_*$ is surjective. 

From Proposition 4.1 of \cite{DW11} and its proof, we obtain the following: 
for any $4 \eps \le r \le 2r \le r' \le \sqrt{\frac{3}{5}}\rho$, 
\begin{align}
\homo_1(\Rips^{r}(P)) &\cong \homo_1(\Rips^{r'}(P)) \cong \homo_1(M) 
\label{eqn:inclusioniso}
\end{align}
where the first isomorphism is induced by the canonical inclusion $i: \Rips^{r}(P) \hookrightarrow \Rips^{r'}(P)$. 
Our assumption on the ranges of
$\alpha$ and $\delta$ implies
the required conditions 
that $4\eps \leq \alpha \leq \frac{1}{3}\sqrt{\frac{3}{5}}\rho$ and $4\eps \leq \delta \leq \frac{1}{3}\sqrt{\frac{3}{5}}\rho$.
We claim that $i_{*}:\homo_1(\Rips^{\alpha}(P)) \rightarrow \homo_1(\Rips^{\alpha + 2\delta}(P))$ induced by the inclusion $i: \Rips^{\alpha}(P) \rightarrow \Rips^{\alpha + 2\delta}(P)$  is an isomorphism.

First, note that this claim follows easily from Eqn (\ref{eqn:inclusioniso}) if $\alpha \le \delta$ by setting $r = \alpha$ and $r' = \alpha + 2\delta (\ge 2 r)$. 
Now assume that $\delta \le \alpha$. 
Consider the following sequence: 
\begin{align*}
\Rips^{\delta}(P) \stackrel{i_1}{\hookrightarrow} \Rips^{\alpha}(P) 
\stackrel{i}{\hookrightarrow} \Rips^{\alpha + 2\delta}(P) 
\stackrel{i_2}{\hookrightarrow} \Rips^{3\alpha}(P). 
\end{align*}
By Eqn (\ref{eqn:inclusioniso}), we have that the composition of inclusions $ i \circ i_1: \Rips^{\delta}(P) \rightarrow \Rips^{\alpha + 2\delta}(P)$ induces an isomorphism at the homology level. Hence $i_*$ is necessarily surjective. 
On the other hand, the composition of inclusions $i_2 \circ i:  \Rips^{\alpha}(P) \rightarrow  \Rips^{3\alpha}(P)$ induces an isomorphism at the homology level. Hence $i_*$ is necessarily injective. 
Putting these two together, we have that $i_*$ is indeed an isomorphism. 
Therefore $h_*$ is injective by Proposition \ref{prop:contg}. 
It then follows that $h_*$ is an isomorphism as claimed. 
\end{proof}

Notice that the lower bound on $\delta$ in Theorem~\ref{Surface-iso} is not
restricted by $\alpha$. This means that one can have a dense
input graph for a large $\alpha$ whose connectivity
does not restrict the size of the subsample.

In the next subsection, we show two examples
of surface data where the graph induced complex has 
the correct $\homo_1$-homology 
with a considerably fewer simplices than the $\nu$-witness
complex, a modified witness complex suggested in~\cite{BGO09} for 
capturing the topology correctly.

\subsection{Reconstruction of $M$ using $\G^{\alpha}(P,Q,d_E)$}
In this subsection, we observe that the graph induced complexes
can also be used for surface reconstruction. It is known
that if $P$ is dense and $T$ is a simplicial complex with vertex set
$P$ which satisfies the following conditions, a simplicial manifold
can be extracted from $T$ that is homeomorphic to
$M$~\cite{ACDL02,Dey}. The conditions are: 
(i) $T$ is embedded in $\mathbb{R}^3$,
(ii) all triangles in $T$ have small circumradius with respect to reach
and (iii) $T$ contains the restricted Delaunay triangulation.
We show that $\G^{\alpha}(P,Q,d_E)$ 
contains the restricted Delaunay triangulation.
We then prune $\G^{\alpha}(P,Q,d_E)$ so that conditions (i)-(ii)
are satisfied, but none of the restricted Delaunay
triangles are deleted in the process which then ensures condition (iii). 

\begin{theorem} \label{thm:DelInclusion}
For $8\eps \leq \delta \leq \frac{2}{27}\rho$ and $\alpha \geq 8\eps$, 
we have that $\resdel{Q}{M}\subseteq \G^{\alpha}(P,Q,d_E)$ where
$P$ is an $\eps$-sample of $(M,d_E)$ and $Q\subseteq P$ is a 
$\delta$-sparse $\delta$-sample
of $(P,d_E)$.
\end{theorem}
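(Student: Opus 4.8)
The plan is to show that every simplex $\sigma \in \resdel{Q}{M}$ — whether a vertex, an edge, or a triangle — appears in $\G^{\alpha}(P,Q,d_E)$ by exhibiting a clique in $G^{\alpha}(P)$ whose vertices have pairwise distinct nearest neighbours in $Q$ that are exactly the vertices of $\sigma$. The vertex case is trivial since $Q \subseteq P$. For an edge $\sigma = \{q_1, q_2\}$ or a triangle $\sigma = \{q_1, q_2, q_3\}$, the idea is to take a surface Delaunay ball $B(c, r)$ of $\sigma$, which exists because $\sigma \in \resdel{Q}{M}$ means its Voronoi dual meets $M$, and then invoke Proposition~\ref{LEMMA:FIND-POINT} at each vertex $q_i$ of $\sigma$: this produces a point $p_i \in P$ with $p_i \in B(c, 4\eps)$ and $\nu_{d_E}(p_i) = q_i$. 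Since all the $p_i$ lie in the common ball $B(c, 4\eps)$ of radius $4\eps$, they are pairwise within $8\eps \leq \alpha$ of each other, so $\{p_1, \ldots, p_{k+1}\}$ is a clique in $G^{\alpha}(P)$. Because $\nu_{d_E}(p_i) = q_i$ and the $q_i$ are distinct, this clique witnesses $\sigma$ in $\G^{\alpha}(P,Q,d_E)$ by the defining property of the graph induced complex.

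First I would check the hypotheses needed to apply Proposition~\ref{LEMMA:FIND-POINT}: it requires $8\eps \leq \delta \leq \frac{2}{27}\rho(M)$, which is exactly our standing assumption, and it requires $\sigma$ to be a restricted Delaunay edge or triangle with the prescribed vertex — this is met for each vertex $q_i$ of $\sigma$ in turn. Then I would verify the triangle-inequality computation: $\|p_i - p_j\| \leq \|p_i - c\| + \|c - p_j\| \leq 4\eps + 4\eps = 8\eps$, so the condition $\alpha \geq 8\eps$ guarantees $(p_i, p_j) \in E^{\alpha}$. Finally I would note that the $p_i$ need not be distinct from the $q_i$, but that is irrelevant: what matters is that their nearest-neighbour images under $\nu_{d_E}$ are the distinct vertices $q_1, \ldots, q_{k+1}$, so by the definition of $\G(P, Q, \nu_{d_E})$ the simplex $\sigma$ is included.

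The main obstacle — really the only substantive point — is ensuring that Proposition~\ref{LEMMA:FIND-POINT} genuinely applies to each vertex of $\sigma$, i.e., that when $\sigma$ is a restricted Delaunay triangle, each of its three vertices satisfies the hypothesis of that proposition with the \emph{same} surface Delaunay ball. This is where the sparsity of $Q$ does its work, but since Proposition~\ref{LEMMA:FIND-POINT} is already stated and proved in the appendix, here one only needs to observe that a surface Delaunay ball of $\sigma$ is simultaneously a ball centred on $M$ whose bounding sphere passes through every vertex of $\sigma$, so it serves as the required ball for all vertices at once. Everything else is the short clique argument above. I would therefore keep the proof brief: state the choice of surface Delaunay ball, apply Proposition~\ref{LEMMA:FIND-POINT} at each vertex, do the $8\eps$ distance bound, and conclude.
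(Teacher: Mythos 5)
Your proof is correct and takes essentially the same route as the paper: apply Proposition~\ref{LEMMA:FIND-POINT} at each vertex of $\sigma$ against a common surface Delaunay ball to obtain $p_i \in B(c,4\eps)$ with $\nu_{d_E}(p_i)=q_i$, use the $8\eps \leq \alpha$ diameter bound to form a clique in $G^{\alpha}(P)$, and conclude via the definition of the graph induced complex. The one point you leave implicit---that the $p_i$ must be pairwise distinct for $\{p_1,\ldots,p_{k+1}\}$ to be a $(k+1)$-clique---is immediate since $\nu_{d_E}$ is a function and the $q_i$ are distinct, whereas the paper instead establishes it geometrically via disjointness of the cones from the proof of Proposition~\ref{LEMMA:FIND-POINT}.
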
 
\begin{proof}
We will show that if $8\eps \leq \delta \leq \frac{2}{27}\rho$
and $\alpha \geq 8\eps$,  then any triangle $\sigma \in \resdel{Q}{M}$ is in $\G^{\alpha}(P,Q,d_E)$. 
The theorem follows from this. 

Let $\sigma=q_1q_2q_3$, and $c$ the center of a surface Delaunay ball of $\sigma$. 
By Proposition~\ref{LEMMA:FIND-POINT}, 
there exists a point $p_i\in P$ in $B(c, 4\eps)$ so that 
$q_i$ is the closest point in $Q$ to $p_i$ for $i=1,2,3$. 
It turns out that the interior of bounded cones used in 
the proof of Proposition~\ref{LEMMA:FIND-POINT} 
for $q_1$, $q_2$ and $q_3$ are disjoint.  
Hence each point $p_i$ found in $B(c, 4\eps)$ corresponding 
to $q_i$ is distinct from the other two. 
Therefore, if $\alpha \geq 8\eps$, the vertices
$p_1,p_2$ and $p_3$ form a clique in $G^{\alpha}(P)$ and
hence the triangle $q_{1}q_{2}q_{3}$ is in $\G^{\alpha}(P,Q,d_E)$. 
\end{proof} 

The complex $\G^{\alpha}(P,Q,d_E)$ may have intersecting triangles. 
We prune $\G^{\alpha}(P,Q,d_E)$ to eliminate
all such pairwise intersections while leaving the restricted
Delaunay triangles in the complex. This ensures that the resulting
complex embeds in $\mathbb{R}^3$ and still contains the restricted Delaunay
triangulation. Our simple observation is that if two intersecting triangles
$t_1$ and $t_2$ do not intersect in a common face, one can decide \emph{locally} 
which of the two can possibly be in a Delaunay triangulation. 
\begin{observation}
If $V$ is the vertex
set of two intersecting triangles
$t_1$ and $t_2$ whose intersection is not a common face of both,
then at least one of $t_1$ and $t_2$ 
is not in $\del{V}$. The triangle which is not in
$\del V$ cannot be in $\del P$ where $V\subseteq P$. 
\end{observation}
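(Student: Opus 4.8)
The plan is to prove the first assertion by contradiction, using the fact that the Delaunay triangulation of a point set in $\mathbb{R}^3$ in general position (which we assume, as is standard for surface reconstruction and elsewhere in this paper) is a \emph{geometric} simplicial complex embedded in $\mathbb{R}^3$; in particular any two of its simplices meet in a common face, possibly the empty face. So first I would suppose that both $t_1$ and $t_2$ belong to $\del V$. Then $t_1 \cap t_2$ would have to be a common face of both, contradicting the hypothesis that their intersection is \emph{not} a common face. Hence at least one of $t_1, t_2$ --- say $t_1$ --- is not in $\del V$.

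For the second assertion, the plan is to invoke the monotonicity of Delaunay membership under enlarging the point set, through the empty-ball characterization: a triangle $\sigma$ with vertices in a set $S$ lies in $\del S$ if and only if there is a closed ball $B$ whose bounding sphere contains the three vertices of $\sigma$ and whose interior meets $S$ in nothing. Suppose for contradiction that the triangle $t_1$, which we showed is not in $\del V$, nevertheless lies in $\del P$. Then there is such a witnessing ball $B$ with $\mathrm{int}(B)\cap P=\emptyset$ and the vertices of $t_1$ on $\partial B$. Since $V\subseteq P$ we get $\mathrm{int}(B)\cap V\subseteq \mathrm{int}(B)\cap P=\emptyset$, and the vertices of $t_1$ lie in $V$; so the very same ball $B$ certifies $t_1\in\del V$ --- a contradiction. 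Therefore $t_1\notin\del P$, as claimed. The symmetric argument applies if $t_2$ is the triangle missing from $\del V$.

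The argument is short, so I do not expect a genuine technical obstacle; the one place needing care is degeneracies. When five or more points of $P$ are cospherical, ``the'' Delaunay triangulation is not unique and a simplex with an empty circumball need not sit inside a chosen triangulation, which could threaten both the face-to-face property and the clean equivalence used above. I would dispose of this by assuming $P$ is in general position (the convention adopted throughout the surface-reconstruction results here), or, to stay fully general, by passing to a symbolically perturbed / weighted Delaunay triangulation, for which both properties persist. With general position in force, the two steps above are immediate.
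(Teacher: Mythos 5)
Your proof is correct and is precisely the argument the paper implicitly relies on: the Observation is stated there without any accompanying proof, treated as self-evident. The two ingredients you use --- that $\del{V}$ is an embedded simplicial complex in which any two simplices meet in a common (possibly empty) face, and that the empty-circumball characterization of Delaunay simplices is monotone under enlarging the point set from $V$ to $P \supseteq V$ --- are the standard facts behind the statement, and your caveat about assuming general position (or symbolically perturbing) to handle cospherical degeneracies is the appropriate one.
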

One can check locally the Delaunay condition for $t_1$ and $t_2$
and decide to throw away at least one triangle which is not in $\del{V}$. 
This takes only constant time since $V$ contains at most $6$ vertices. 
Notice that no restricted Delaunay triangle can be thrown away by this
process. After repeatedly pruning away one of the pairwise intersecting
triangles, we arrive at a complex that embeds in $\mathbb{R}^3$ and contains 
the restricted Delaunay triangulation $\resdel{Q}{M}$. Next, we prune all
triangles that have circumradius more than $2\delta$. 
Again, since the surface Delaunay ball of each restricted 
Delaunay triangle has circumradius at most 
$\delta+\eps\leq 2\delta$, 
one is ensured that no restricted Delaunay triangle
is eliminated. Assuming $\delta$ to be sufficiently small,
a sharp edge pruning and a walk on the outside
of the resulting complex as described in~\cite{ACDL02,Dey}
provides the reconstructed surface. 
The output surface enjoys one nice property that the triangles have
bounded aspect ratios since they have
circumradii of at most $2\delta$ and 
edge lengths of at least $\delta$ 
($Q$ is $\delta$-sparse).

\begin{theorem}
If $8\eps \leq \delta \leq \frac{2}{27}\rho$, $\alpha \geq 8\eps$, 
$P$ is an $\eps$-sample of $(M,d_E)$, and $Q\subseteq P$ is a 
$\delta$-sparse $\delta$-sample of $(P,d_E)$, then a triangulation
$T\subseteq \G^{\alpha}(P,Q,d_E)$ of $M$ can be computed where 
each triangle in $T$ has a bounded aspect ratio.
\end{theorem}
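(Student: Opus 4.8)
The plan is to assemble the three conditions (i)--(iii) required by the reconstruction theorems of~\cite{ACDL02,Dey} by pruning $\G^{\alpha}(P,Q,d_E)$ carefully, tracking at every step that no restricted Delaunay triangle is ever removed. The starting point is Theorem~\ref{thm:DelInclusion}, which already guarantees $\resdel{Q}{M}\subseteq \G^{\alpha}(P,Q,d_E)$ under the stated hypotheses on $\eps,\delta,\alpha$; this is the seed of condition (iii), and the whole argument is organized so that this containment survives every pruning pass.

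First I would run the pairwise-intersection pruning described right before the statement: whenever two triangles $t_1,t_2\in \G^{\alpha}(P,Q,d_E)$ intersect in something other than a common face, invoke the Observation to discard one that fails the local Delaunay condition on the vertex set $V$ of $t_1\cup t_2$ (at most $6$ points), which by the Observation cannot lie in $\del P$ and hence cannot lie in $\resdel{Q}{M}$. Iterating to a fixed point yields an embedded complex in $\mathbb{R}^3$ (condition (i)) still containing $\resdel{Q}{M}$. Next I would remove every triangle whose circumradius exceeds $2\delta$; condition (ii) (small circumradius relative to $\rho$) then holds because $2\delta \le \frac{4}{27}\rho$ is small enough for the Edelsbrunner--Shah / cocone machinery, and no restricted Delaunay triangle is lost because each such triangle sits on a surface Delaunay ball of radius at most $\delta+\eps \le 2\delta$. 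Finally I would apply the sharp-edge pruning and the ``walk on the outside'' extraction of~\cite{ACDL02,Dey} to carve out a simplicial $2$-manifold $T$; since $T$ still contains $\resdel{Q}{M}$, is embedded, and has only small-circumradius triangles, those theorems give that $T$ is homeomorphic to $M$, i.e.\ $T$ triangulates $M$. The aspect-ratio bound is then immediate: every triangle of $T$ has circumradius $\le 2\delta$ by construction and every edge has length $\ge \delta$ because $Q$ is $\delta$-sparse, so the ratio of circumradius to shortest edge is at most $2$.

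I would present this as a short bookkeeping argument rather than reprove the reconstruction theorems: cite~\cite{ACDL02,Dey} for the implication ``(i)--(iii) $\Rightarrow$ homeomorphic reconstruction,'' cite~\cite{Dey} again (as in Theorem~\ref{Surface-iso}) for path-connectedness of restricted Voronoi cells when $\delta \le 0.18\rho$ if that is needed to invoke the reconstruction guarantee, and spend the actual work on verifying that each of the three pruning passes preserves $\resdel{Q}{M}$ and that the numerical hypotheses $8\eps\le\delta\le\frac{2}{27}\rho$ and $\alpha\ge 8\eps$ suffice to feed into the cited results (in particular that $2\delta$ is a legitimate circumradius threshold for the given reach).

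The main obstacle I anticipate is not any single deep step but making the ``walk on the outside'' / sharp-edge pruning rigorous enough to conclude a genuine manifold: one must be sure that after the circumradius pruning the complex is still dense enough and ``thick'' enough around $M$ that the extraction in~\cite{ACDL02,Dey} applies verbatim, and that the extraction itself does not delete a restricted Delaunay triangle (it deletes only triangles not exposed on the outside boundary, and every restricted Delaunay triangle is locally extreme with respect to $M$, so this should go through, but the argument needs the ``sufficiently small $\delta$'' caveat already flagged in the text). A secondary point of care is confirming that the local Delaunay test in the Observation is genuinely decidable in constant time and that iterating the pairwise pruning terminates --- termination is clear since each step strictly decreases the triangle count, but one should note that the fixed point may depend on the order of pruning while the invariant $\resdel{Q}{M}\subseteq T$ does not.
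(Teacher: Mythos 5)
Your proposal follows essentially the same route the paper takes: start from Theorem~\ref{thm:DelInclusion}, do the pairwise-intersection pruning via the Observation, prune triangles of circumradius exceeding $2\delta$ (preserving $\resdel{Q}{M}$ since surface Delaunay balls have radius at most $\delta+\eps\le 2\delta$), then invoke the sharp-edge pruning and outside walk of~\cite{ACDL02,Dey}, and read off the aspect-ratio bound from the $\delta$-sparsity of $Q$. One tiny slip: the discarded triangle has vertices in $Q$, so the Observation should be applied with $Q$ in place of $P$ to conclude it is not in $\del Q\supseteq\resdel{Q}{M}$; your conclusion is still correct since $Q\subseteq P$.
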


Experimentally, we observe that surfaces can be reconstructed
from a very sparse subsample with this strategy.
Figure (\ref{fig:fe_triangulation}) presents two examples
for surface reconstruction.
The original sample $P$ has $1,575,055$ points for the
Fertility model and $1,049,892$ points for Botijo model.
The input graphs for the graph induced complex
are constructed by connecting two points within
distance of $\alpha = 0.45$ for {\sc Fertility }
and $\alpha = 1.0$ for {\sc Botijo}.
The 2-skeleton of the Rips complex built on the input graph
has $45,788,607$ simplices for {\sc Fertility} and
$91,264,091$ simplices for {\sc Botijo}.
The subsample $Q$ consists of $3007$ points for
{\sc Fertility}  with $\delta = 3.68$,
and $4659$ points for {\sc Botijo}  with $\delta = 4.0$.
The graph induced complex $\G^{\alpha}(P,Q,d_E)$ built
on the subsample
has : $3007$ vertices, $9178$ edges, $6304$ triangles, $139$ tetrahedra
and no other higher dimension simplices for {\sc Fertility};
$4659$ vertices, $14709$ edges, $10755$ triangles, $718$ tetrahedra,
$5$ $4$-dimensional simplices,
 and no other higher dimension simplices for {\sc Botijo}.
The reconstructed surfaces from $\G^{\alpha}(P,Q,d_E)$
are shown in Figure~\ref{fig:fe_triangulation}.
For {\sc Fertility}, it has $3007$ vertices, $9039$ edges and $6026$ triangles;
for {\sc Botijo}, it has $4659$ vertices, $14001$ edges and $9334$ triangles.
Evidently, the graph induced complex
has only a few more simplices compared to the reconstructed surface.

For a comparison, we also constructed
the $\nu$-witness complex suggested in~\cite{BGO09} which
also contains the restricted
Delaunay triangulation $\resdel{Q}{M}$ with $\nu=(1, 6, 6, 4)$.
The $\nu$-witness complex for {\sc Fertility} has $3007$ vertices,
$35687$ edges, $119237$ triangles and $19874$ tetrahedra;
the $\nu$-witness complex for {\sc Botijo} has
$4659$ vertices, $54648$ edges, $180936$ triangles and $29654$ tetrahedra.
The graph induced complex has much smaller size,
but still captures $\beta_1$
($\beta_1 = 8$ for the {\sc Fertility}, and $\beta_1 = 10$
for the {\sc Botijo}).

\begin{figure*}[htpb]
\begin{center}
\begin{tabular}{cc}
\includegraphics[scale = 0.4]{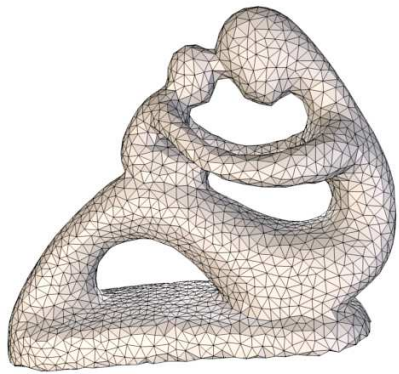}
&
\includegraphics[scale = 0.4]{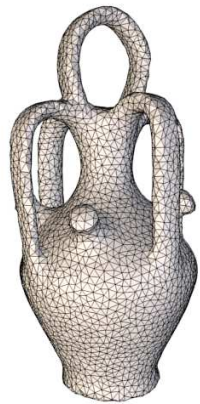} \\
(a) {\sc Fertility} model & (b) {\sc Botijo} model
\end{tabular}
\caption{Reconstructed surfaces for {\sc Fertility}
 and {\sc Botijo} models.}
\label{fig:fe_triangulation}
\end{center}
\end{figure*}

\section{Point data for more general domains}
In this section we consider domains beyond surfaces in $\mathbb{R}^3$.
\subsection{Manifolds}
Let $M$ be a $k$-manifold embedded in
$\mathbb{R}^n$ and $P$ a discrete sample of $(M,d_E)$. 
We observe that the overall setup in section~\ref{surface-topoinfer}
for inferring $\homo_1$-homology
from the graph induced complex generalizes easily to higher dimensions.
The inclusion map $\Rips^{\alpha}(P) \hookrightarrow  \Rips^{\alpha+2\delta}(P)$
still induces an isomorphism at the homology level if 
$\alpha$ and $\delta$ are chosen appropriately. In that case, the
map $h_*:\homo_1(\Rips^{\alpha}(P)) \rightarrow \homo_1(\G^{\alpha}(P,Q,d))$
remains injective by the same argument as before.
The main trouble arises when we try to prove that it is also surjective.
Observe that, to prove the surjectivity of $h_*$, we used the fact that
the restricted Voronoi cell $M_q=V_q\cap M$ in $\vor Q$ is
connected (Proposition~\ref{Rips-connect}). Unfortunately, 
this is not necessarily
true in high dimensions given the counterexamples in~\cite{BGO09,CDR05}.
To overcome this impediment we change the distance function replacing the
Euclidean distance with the graph distance while building
$\G^{\alpha}(P,Q,d)$. Specifically, we still consider $G^{\alpha}(P)$ to be the 
graph connecting points in $P$ with Euclidean
distance $\alpha$ or less, but take $Q$ to be
a $\delta$-sparse $\delta$-sample of $(P,d_G)$ where the graph distance
$d_G=d_{G^{\alpha}(P)}$ is defined with 
the Euclidean lengths as the edge weights.
Then, we consider $\G^{\alpha}(P,Q,d_G)$.

As before, let $P_q\subseteq P$ be the set of points nearest to a point
$q\in Q$ with respect to $d_G$. 
The modification in distance function immediately allows us to
claim that the $1$-skeleton of $\Rips^{\alpha}(P_q)$ is connected,
which was needed to claim that $h_*$ is surjective. 

\begin{proposition}
$\Rips^{\alpha}(P_q)$ is connected, and thus $h_*$ is surjective. 
\label{rpq-connected}
\end{proposition}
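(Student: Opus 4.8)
The plan is to show that $\Rips^{\alpha}(P_q)$ is connected by exhibiting, for every $p\in P_q$, a path inside $\Rips^{\alpha}(P_q)$ joining $p$ to $q$ itself (note $q\in P_q$ since $d_G(q,q)=0$). The natural candidate is the shortest path in $G^{\alpha}(P)$ realizing $d_G(p,q)$: write it as $p = p_0, p_1, \dots, p_m = q$ with each consecutive pair joined by an edge of $G^{\alpha}(P)$, hence an edge of $\Rips^{\alpha}(P)$. The key claim is that every vertex $p_i$ on this shortest path again lies in $P_q$, i.e.\ $q$ is (a) nearest point to $p_i$ under $d_G$. Once this is established, the whole path lies in the $1$-skeleton of $\Rips^{\alpha}(P_q)$, so $p$ and $q$ are connected there; since $p$ was arbitrary, $\Rips^{\alpha}(P_q)$ is connected.

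The heart of the argument is the sub-path optimality of shortest paths in the graph metric. Suppose $p_i$ is an interior vertex of a shortest $p$--$q$ path $\gamma$ in $G^{\alpha}(P)$. Then the portion of $\gamma$ from $p_i$ to $q$ is itself a shortest path, so $d_G(p_i,q)$ equals its length. If some other landmark $q'\in Q$ were strictly closer to $p_i$ in $d_G$, then we could route from $p$ to $p_i$ along $\gamma$ and then take a shortest $p_i$--$q'$ path, obtaining $d_G(p,q') \le d_G(p,p_i) + d_G(p_i,q') < d_G(p,p_i) + d_G(p_i,q) = d_G(p,q)$, contradicting that $q$ is a nearest landmark to $p$. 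A mild point of care is the tie-breaking implicit in the map $\nu_{d_G}$: when $q'$ is merely as close as $q$ to some $p_i$, one should choose, among all shortest $p$--$q$ paths, one that is consistent with the nearest-point assignment, or simply argue that $P_q$ can be taken to be the union of $\nu_{d_G}$-preimages over a consistent choice; in either reading the strict-inequality contradiction above still forces every interior vertex into $P_q$.

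The one step that genuinely requires attention is exactly this consistency of tie-breaking: the definition of $\nu_{d_G}$ picks \emph{some} point in $\argmin_{q\in Q} d_G(p,q)$, and a priori the chosen nearest landmark along a shortest path need not be the one the map assigned. The clean way to dispose of this is to observe that it suffices to connect each $p$ to \emph{a} nearest landmark, and that the relevant fact for Proposition~\ref{surjectivity} is connectivity of $\Rips^{\alpha}(P_q)$ for the set $P_q = \nu_{d_G}^{-1}(q)$; choosing the shortest-path tree rooted at $Q$ (break ties once and globally) makes $\nu_{d_G}$ follow parent pointers, and then every vertex on the tree path from $p$ to its root $q$ lies in $P_q$ by construction. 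I would phrase the proof around such a shortest-path forest to avoid any ambiguity, and otherwise the argument is a short triangle-inequality computation in $d_G$; no geometry of $M$, no reach, and no sparsity of $Q$ enters here, which is precisely why switching from $d_E$ to $d_G$ resolves the high-dimensional obstruction.
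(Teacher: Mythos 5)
The paper itself states Proposition~\ref{rpq-connected} without a written proof, asserting only that ``the modification in distance function immediately allows us to claim'' the connectivity of the $1$-skeleton of $\Rips^{\alpha}(P_q)$; your argument supplies exactly the details that are left implicit, and it is correct. The shortest-path sub-path-optimality computation is the right engine: any interior vertex $p_i$ of a $d_G$-shortest $p$--$q$ path satisfies $d_G(p_i,q')\ge d_G(p_i,q)$ for all $q'\in Q$, since otherwise $d_G(p,q')\le d_G(p,p_i)+d_G(p_i,q')<d_G(p,q)$, and each edge of the path is an edge of $G^{\alpha}(P)$, hence of $\Rips^{\alpha}(P_q)$ once we know both endpoints land in $P_q$. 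You are also right to flag the tie-breaking subtlety: the strict-inequality argument only shows $q\in\argmin_{q'\in Q}d_G(p_i,q')$, not that the specific representative $\nu_{d_G}(p_i)$ equals $q$, and the clean fix is indeed to realize $\nu_{d_G}$ by a globally consistent shortest-path forest (multi-source Dijkstra rooted at $Q$), so that parent-following from $p$ yields a path of vertices all assigned to the same root $q$. That forest-based formulation is the tidiest way to make the proposition airtight for \emph{every} admissible choice of $\nu_{d_G}$; alternatively one could observe that the statement only needs to hold for \emph{some} consistent $\nu_{d_G}$, which the proof of Proposition~\ref{surjectivity} uses anyway. Either reading closes the argument, and as you note, no geometry of $M$, no reach bound, and no sparsity of $Q$ enters --- which is precisely the point of replacing $d_E$ by $d_G$.
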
 

\begin{theorem}
Let $P$ be an $\eps$-sample of an embedded smooth and compact manifold $M$ with reach $\rho$, 
and $Q$ a $\delta$-sample of $(P, d_G)$.  
For $4\eps \le \alpha, \delta \le \frac{1}{3} \sqrt{\frac{3}{5}} \rho$, 
 the map $h_*: \homo_1(\Rips^{\alpha}(P))\rightarrow \homo_1(\G^{\alpha}(P,Q,d_G))$ 
is an isomorphism where $h:\Rips^{\alpha}(P)\rightarrow \G^{\alpha}(P,Q,d_G)$
is the simplicial map induced by the nearest point map $\nu_{d_G}: P\rightarrow Q$.
\label{cor:manifoldgeodesic}
\end{theorem}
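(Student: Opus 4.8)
The plan is to follow exactly the architecture of the surface case (Theorem \ref{Surface-iso}), splitting the isomorphism claim for $h_*$ into surjectivity and injectivity and plugging in the replacements we already have available. First I would invoke Proposition \ref{prop:contg}: since the graph metric $d_G$ satisfies $d_G(p_1,p_2) \ge \|p_1-p_2\|$ (edge weights are Euclidean lengths, and a shortest path is at least as long as the straight segment), the proposition applies with $d = d_G$, yielding the sequence $\Rips^\alpha(P) \stackrel{h}{\rightarrow} \G^\alpha(P,Q,d_G) \stackrel{j}{\hookrightarrow} \Rips^{\alpha+2\delta}(P)$ with $j_* \circ h_* = i_*$, where $i$ is the canonical inclusion. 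For surjectivity of $h_*$ I would use Proposition \ref{rpq-connected}: the point is that for each $q$, the set $P_q$ of points with $d_G$-nearest neighbor $q$ is connected \emph{in the graph $G^\alpha(P)$} itself — any $p \in P_q$ has a $d_G$-shortest path to $q$ lying entirely inside $P_q$ (a standard property of nearest-point regions in a graph metric: every vertex on a shortest path from $p$ to its nearest site is itself nearest to that same site), and this path is a walk in $G^\alpha(P)$, hence a path in the $1$-skeleton of $\Rips^\alpha(P_q)$. Then the preimage-of-a-$1$-cycle argument from Proposition \ref{surjectivity} goes through verbatim to give surjectivity of $h_*$.

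For injectivity, I would reuse Eqn (\ref{eqn:inclusioniso}) from Proposition 4.1 of \cite{DW11}: for $4\eps \le r \le 2r \le r' \le \sqrt{3/5}\,\rho$, the inclusion $\Rips^r(P) \hookrightarrow \Rips^{r'}(P)$ induces an isomorphism on $\homo_1$ isomorphic to $\homo_1(M)$. The hypotheses $4\eps \le \alpha,\delta \le \tfrac13\sqrt{3/5}\,\rho$ guarantee $4\eps \le \alpha+2\delta \le \sqrt{3/5}\,\rho$ and also $4\eps \le 3\alpha \le \sqrt{3/5}\,\rho$, so I can run the identical sandwiching chain used in Theorem \ref{Surface-iso}: if $\alpha \le \delta$ set $r=\alpha$, $r'=\alpha+2\delta$ directly; otherwise use $\Rips^\delta(P) \hookrightarrow \Rips^\alpha(P) \hookrightarrow \Rips^{\alpha+2\delta}(P) \hookrightarrow \Rips^{3\alpha}(P)$, where the composite $\Rips^\delta \hookrightarrow \Rips^{\alpha+2\delta}$ being an isomorphism forces $i_*$ surjective and the composite $\Rips^\alpha \hookrightarrow \Rips^{3\alpha}$ being an isomorphism forces $i_*$ injective, hence $i_*$ is an isomorphism. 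Since $i_* = j_* \circ h_*$ is injective, $h_*$ is injective; combined with surjectivity, $h_*$ is an isomorphism.

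The reason this case is \emph{easier} than the surface case — and the reason there is no geometric obstacle to flag — is that the switch from $d_E$ to $d_G$ trivializes precisely the step (connectivity of $P_q$, Propositions \ref{LEMMA:FIND-POINT} and \ref{Rips-connect}) that required the delicate $\delta$-sparsity and positive-reach angle arguments in Section 3; here connectivity of $P_q$ in $G^\alpha(P)$ is an essentially formal property of graph-metric Voronoi regions, so no density condition relating $\alpha$ and $\eps$ beyond $\alpha \ge 4\eps$ is needed. The only point that deserves a careful sentence is the claim that shortest-path vertices inherit the nearest-site property (so $P_q$ is connected in $G^\alpha$ rather than merely in some coarser sense) — if ties in $\nu_{d_G}$ are broken arbitrarily one should note that one can still route within a closed nearest region, or equivalently argue that the union of $d_G$-balls of radius equal to the covering value is connected — but this is routine. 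I expect no genuine obstacle; the work is entirely in checking that the numerical ranges feed correctly into Eqn (\ref{eqn:inclusioniso}).
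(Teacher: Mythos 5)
Your proposal is correct and follows exactly the architecture the paper intends: the paper does not write out a proof of this theorem, but the surrounding discussion plus Proposition~\ref{rpq-connected} and the sandwiching argument of Theorem~\ref{Surface-iso} amount to precisely the plan you give, including the observation that $d_G \ge d_E$ makes Proposition~\ref{prop:contg} applicable and that the numerical ranges feed cleanly into Eqn~(\ref{eqn:inclusioniso}). Your remark that the graph-metric nearest-region connectivity needs a consistent tie-breaking rule is a small detail the paper elides in stating Proposition~\ref{rpq-connected} without proof; you are right that it is routine (e.g.\ lowest-index tie-breaking propagates along shortest paths) and your flagging of it is a mild improvement rather than a deviation.
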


\subsection{A leaner subsampling for $\homo_1$}
In this subsection we show that the subsample $Q$ can be made leaner.
The main insight is that
we can define a feature size larger than the reach which permits
us to subsample more sparsely
with respect to this larger feature size. 
Gao et al.~\cite{GGOW10} considered a similar feature size for the
same reason of requiring sparser sampling for a two dimensional shape.
Here we show that such a sparser sample is also adequate for determining
$\homo_1$ of manifolds in high dimensions. Our experimental results
in Figure~\ref{fig:comparison} 
suggest that one can obtain information about $\homo_1$ from a very sparse 
sample in practice.

Let $\mathcal{K}$ be a simplicial complex 
with non-negative weights on its edges.
We define \textit{homological loop feature size} as
\[\mathrm{hlfs}(\mathcal{K}) = \left\{ \begin{array}{ll}
\frac{1}{2} \inf \{|c|, \mbox{ $c$ is
non null-homologous 1-cycle in } \mathcal{K}\}\\
\infty \mbox{ if no such $c$ exists.}
\end{array}
\right. \]
This feature size is very similar to the 
\textit{systolic feature size } $\mathrm{sfs}(X,d)$ of a 
compact metric space $(X,d)$ \cite{GGOW10} 
which is the length of the
shortest non-contractible loop in $X$.
Our definition of $\mathrm{hlfs}$
when applied to a metric space $(X,d)$
becomes larger than or equal to $\mathrm{sfs}(X,d)$ .
Notice that every loop of $\mathcal{K}$ with length 
less than $2\mathrm{hlfs}(\mathcal{K})$ is null-homologous in $\mathcal K$. 

Let $Q\subseteq P$ be a $\delta$-sample of $(P,d_G)$
as before but with $\delta\leq\frac{1}{2}\mathrm{hlfs}(\mathcal{R}^{\alpha}(P)) - \frac{1}{2}\alpha$.
Let $h: \mathcal{R}^{\alpha}(P) \rightarrow \G^{\alpha}(P,Q,d_G)$ be the 
simplicial map as defined earlier. We aim to show that the 
induced homomorphism $h_{*}$ on 
the first homology is injective.
Since we use graph distances, Proposition~\ref{rpq-connected} 
remains valid and hence
$h_*$ remains to be surjective. However, we cannot claim
$i_*: \homo_1(\Rips^{\alpha}(P))\rightarrow \homo_1(\Rips^{\alpha+2\delta}(P))$
to be an isomorphism because $\delta$ could be larger than required.
Thus, we cannot use $i_*$ to infer $h_*$ to be injective
as before. Nevertheless, we can prove the following result using a different
approach.

\begin{theorem}
\label{injectivity theorem}
If $Q$ is a $\delta$-sample of $(P,d_{G})$
for $\delta < \frac{1}{2}\mathrm{hlfs}(\mathcal{R}^{\alpha}(P)) - \frac{1}{2}\alpha$,
then $h_{*}: \homo_{1}(\mathcal{R}^{\alpha}(P)) \rightarrow
\homo_{1}(\G^{\alpha}(P, Q,d_G))$ is an isomorphism. 
\label{thm:injectivityforsys}
\end{theorem}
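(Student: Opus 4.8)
The plan is to establish injectivity of $h_*$ directly, since surjectivity already follows from Proposition~\ref{rpq-connected}. The key idea is that the obstruction to injectivity would be a $1$-cycle $c$ in $\Rips^\alpha(P)$ that is non-null-homologous there but gets killed in $\G^\alpha(P,Q,d_G)$; I will show such a $c$ must in fact be short, hence null-homologous by the $\mathrm{hlfs}$ hypothesis, yielding a contradiction. Concretely, I would first use the sequence from Proposition~\ref{prop:contg}, namely $\Rips^\alpha(P) \stackrel{h}{\to} \G^\alpha(P,Q,d_G) \stackrel{j}{\hookrightarrow} \Rips^{\alpha+2\delta}(P)$ with $j_* \circ h_* = i_*$, so that $\ker h_* \subseteq \ker i_*$ where $i: \Rips^\alpha(P) \hookrightarrow \Rips^{\alpha+2\delta}(P)$ is the canonical inclusion. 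Thus it suffices to show $\ker i_* = 0$, i.e. that $i_*$ is injective --- wait, that is exactly what fails in general here. So instead I would take a homology class $\eta \in \ker h_*$, pick a cycle representative $c$, and exploit the fact that $h_\#(c)$ bounds in $\G^\alpha(P,Q,d_G)$ together with the bounded ``spread'' of each fiber $P_q$ to produce a short representative of $\eta$ in $\Rips^\alpha(P)$.

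**The main technical step** is a bounded-diameter / short-loop argument. Since $Q$ is a $\delta$-sample of $(P, d_G)$, every point $p \in P$ satisfies $d_G(p, \nu(p)) \le \delta$, so each fiber $P_q = \nu^{-1}(q)$ has $d_G$-diameter at most $2\delta$, and therefore any two points of $P_q$ are joined by a path in $G^\alpha(P)$ of graph-length at most $2\delta$; since each edge has Euclidean length $\le \alpha$, such a path lives in $\Rips^\alpha(P)$ (as a $1$-chain). Now, if $h_\#(c) = \partial D$ for a $2$-chain $D$ in $\G^\alpha(P,Q,d_G)$, I would lift $D$ back: each triangle $q_a q_b q_c$ of $D$ comes (by definition of $\G^\alpha$) from a $3$-clique $p_a p_b p_c$ in $G^\alpha(P)$ with $\nu(p_x) = q_x$; assembling these lifted triangles and closing up the gaps within each fiber $P_q$ using the short connecting paths above, I build a $2$-chain $D'$ in $\Rips^\alpha(P)$ whose boundary $c' := \partial D'$ is a cycle with $h_\#(c') = h_\#(c)$, i.e. $c - c'$ lies in $\ker h_\#$ and, more to the point, $c'$ is homologous to $c$ in $\Rips^\alpha(P)$ (since $c' = c + \partial D'$). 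But $c'$, being a sum of small loops each of Euclidean length at most roughly $\alpha + 2\delta$ (a clique edge of length $\le\alpha$ plus a fiber-connecting detour of $d_G$-length $\le 2\delta$, hence closing into a loop of length $< 2\,\mathrm{hlfs}(\Rips^\alpha(P))$ under the hypothesis $\delta < \tfrac12\mathrm{hlfs} - \tfrac12\alpha$), is a sum of null-homologous loops in $\Rips^\alpha(P)$. Hence $c'$ is null-homologous, so $c = c' - \partial D'$ is null-homologous too, and $\eta = 0$.

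**I expect the delicate point** to be the bookkeeping in the lifting: verifying that the ``gaps'' between consecutive lifted triangles and between a lifted triangle and the original lifted cycle $c$ really can be filled by small loops that individually have length below the $\mathrm{hlfs}$ threshold, rather than accumulating into one long loop. The right way to organize this is to process the bounding chain $D$ triangle-by-triangle (or rather, to induct on the number of triangles in $D$): each elementary move replaces $c$ by a homologous cycle differing by the boundary of one small loop of length $< 2\,\mathrm{hlfs}$, which therefore does not change the homology class; after finitely many such moves one reaches $h_\#$ of the empty chain, i.e. a cycle that is null-homologous by construction, while staying in the same class as $c$ throughout. One must be slightly careful that the ``small loop'' attached at each step is genuinely a $1$-cycle in $\Rips^\alpha(P)$ of controlled length --- this is where $\delta < \tfrac12\mathrm{hlfs}(\Rips^\alpha(P)) - \tfrac12\alpha$ is used precisely, since a loop formed by one $\alpha$-edge together with two fiber-detours of total $d_G$-length at most $2\delta$ has length at most $\alpha + 2\delta < 2\,\mathrm{hlfs}(\Rips^\alpha(P))$, hence is null-homologous by the remark preceding the theorem. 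Combining injectivity with the surjectivity from Proposition~\ref{rpq-connected} gives the isomorphism.
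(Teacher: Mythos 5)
Your proposal follows essentially the same strategy as the paper: reduce injectivity to showing that cycles in $\ker h_\#$ (equivalently, preimages of boundaries of triangles in $\G^{\alpha}(P,Q,d_G)$) are null-homologous, and do so by decomposing such cycles into small loops built from $\alpha$-edges and fiber paths of $d_G$-length $\le 2\delta$, then invoking the $\mathrm{hlfs}$ hypothesis. The paper organizes this by reducing to a single triangle $\sigma$, splitting the preimage cycle $\gamma$ into pieces $\ell_{uv}$ (one per edge of $\sigma$), further decomposing each $\gamma_{uv}$ into unicolored chains and bicolored edges, and finally expressing the result as a sum of ``diamonds''; your triangle-by-triangle induction on the bounding chain $D$ is a legitimate reorganization of the same argument.

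However, two points in your sketch need correction. First, your intermediate object is ill-defined: you set $c' := \partial D'$ but then also assert $c' = c + \partial D'$, which forces $c = 0$. What you want is that $c + \partial D'$ (the defect cycle) decomposes into small null-homologous loops, so that $[c] = [\partial D'] = 0$. Second, and more substantively, your small-loop bound of $\alpha + 2\delta$ is too optimistic. Loops of that length do arise in the reduction (they correspond to the paper's Claim~\ref{path-reduce} loops $z_i = \pi(u,p_i)+p_ip_{i+1}+\pi(u,p_{i+1})$), but the bottleneck loops are the ``diamonds'' obtained when two lifted triangles share an edge $q_aq_b$ of $D$: two bicolored $\alpha$-edges plus two fiber chains of $d_G$-length $\le 2\delta$ each, giving total length $\le 2\alpha + 4\delta$, not $\alpha + 2\delta$. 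This is precisely why the theorem requires $\delta < \tfrac12\mathrm{hlfs} - \tfrac12\alpha$ (so that $2\alpha + 4\delta < 2\,\mathrm{hlfs}$); if your $\alpha + 2\delta$ bound were the true bottleneck, a strictly weaker hypothesis on $\delta$ would suffice. You correctly flag the bookkeeping as the delicate point, but the details you'd need to fill in are exactly the paper's decomposition of $\gamma_{uv}$ into unicolored chains and bicolored edges, the replacement $\gamma_i \mapsto \hat\gamma_i$ via shortest paths to $Q$, and the resulting diamond decomposition.
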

\begin{figure}[htbp]
\begin{center}
\input{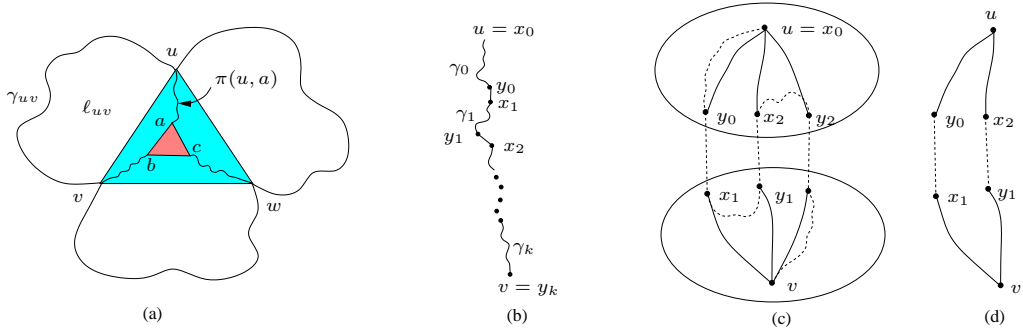}
\caption{(a) $\gamma_{uv}$ makes a cycle with $\pi(u,a)$, $\pi(v,b)$ and $ab$,
(b) $\gamma_{uv}$ as a sum of unicolored chains and bicolored edges, (c) converting $\gamma_{uv}$ (shown dotted) to $\hat{\gamma}_{uv}$, (d) a diamond of
$\hat{\gamma}_{uv}$.
}
\label{hlfs-fig}
\end{center}
\end{figure}
%
\begin{proof}
We only need to show that $h_*$ is injective, as its surjectivity follows from Proposition \ref{rpq-connected}.
To show the injectivity, it suffices to show that $h_{*}$ has a trivial kernel.
Let $\sigma$ be any triangle in $\G^{\alpha}(P, Q,d_G)$.
If under the chain map $h_{\#}$, every cycle in the preimage of the boundary
cycle $\partial \sigma$ is null homologous,
then every null homologous cycle in $\G^{\alpha}(P, Q,d_G)$
has only null homologous cycles in its preimage.
This is true due to the fact that a bounded cycle is a sum of boundaries
of triangles, and that the chain map $h_{\#}$ is surjective (see the proof of Proposition \ref{surjectivity}).
Below we show that under the chain map $h_{\#}$,
every cycle in the preimage of the boundary
cycle of any triangle indeed is null homologous.
It would then follow that the kernel of $h_*$ is trivial.

Let $\gamma$ be any cycle in the preimage of $\partial \sigma$.
We have $\gamma\in\sum_{uv\in\partial\sigma} h_{\#}^{-1}(uv)$
where $uv$ be any edge of of $\sigma$.
Let $\gamma_{uv}$ be any maximal subpath of $\gamma$ so that
$h_{\#}(\gamma_{uv})=uv$ (Figure~\ref{hlfs-fig}(a)).
For each such $\gamma_{uv}$, we construct a cycle $\ell_{uv}$ so that
$\ell_{uv}$ is null homologous and $\gamma$ is homologous
to $\sum\ell_{uv}$. Therefore,
our problem of showing $\gamma$ is null homologous
reduces to the problem of showing every $\ell_{uv}$ is null-homologous.

We construct $\ell_{uv}$ as follows.
By the construction of $\G^{\alpha}(P, Q,d_G)$,
there is a triangle $abc \in \mathcal{R}^\alpha(P)$
such that $h(abc) = \sigma$ with $h(a) =u$, $h(b)=v$.
Consider the shortest paths $\pi(u,a)$ and
$\pi(v,b)$ in $G^{\alpha}(P)$ from $u$ to $a$ and from $v$ to $b$
respectively. Observe that all vertices in $\pi(u,a)$ and $\pi(v,b)$
are mapped to $u$ and $v$ respectively by $h$ since we are using the graph-induced distance $d_G$ to construct $\G^{\alpha}(P, Q,d_G)$.
Take $\ell_{uv}$ to be the chain $\pi(u,a)+ ab +\pi(v,b) + \gamma_{uv}$;
refer to Figure~\ref{hlfs-fig}(a).
With this choice, we have $\gamma=\sum \ell_{uv} + \partial(abc)$
and hence $\gamma$ is homologous to $\sum\ell_{uv}$ as promised.
To prove $\ell_{uv}$ null-homologous, we construct a homologous
path $\hat{\gamma}_{uv}$ to $\gamma_{uv}$ which gives a homologous
cycle $\hat{\ell}_{uv}$ to $\ell_{uv}$. We then prove that $\hat{\ell}_{uv}$
is null-homologous.

Call an edge $e = (x,y)$ in $\mathcal{R}^\alpha(P)$ \emph{bicolored} if its
two end-points are mapped to two distinct vertices by $h$; otherwise,
$e$ is \emph{unicolored}. A $1$-chain from $\mathcal{R}^\alpha(P)$ is
\emph{unicolored} if it has only unicolored edges.
For simplicity, we assume that vertices from $\gamma_{uv}$ are all contained in $h^{-1}(u) \cup h^{-1}(v)$ because $\gamma_{uv}$ is always
homologous to a path containing vertices mapped only to $u$ or $v$.
In this case, $\gamma_{uv}$ can be decomposed into a set of bicolored
edges $\{ y_0x_1, y_1x_2, \ldots, y_{k-1}x_k \}$ together with a set of
unicolored chains $\{ \gamma_0, \gamma_1, \ldots, \gamma_k \}$
such that $\partial \gamma_i = x_i + y_i$.
In particular, $\gamma_{uv}$ can be written as
\begin{equation}
\gamma_{uv} = \gamma_0 + y_0x_1 + \gamma_1 + y_1x_2 + \cdots + y_{k-1}x_k+\gamma_k.
\label{eqn:gamma}
\end{equation}
See Figure \ref{hlfs-fig}(b) for an illustration.

\begin{claim} Let $\gamma_i$ be a unicolored chain with two boundary points
$x_i, y_i$ so that for any simplex $\tau \in \gamma_i$, $h(\tau)=u$. Then, $\gamma_i$ is homologous to the
chain $\hat{\gamma}_i=\pi(x_i,u)+\pi(u,y_i)$.
\label{path-reduce}
\end{claim}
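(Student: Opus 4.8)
The plan is to prove the claim by showing that the $1$-cycle $z := \gamma_i + \hat{\gamma}_i$ is null-homologous in $\mathcal{R}^{\alpha}(P)$, and to do this by writing $z$ as a sum of $1$-cycles each so short that the $\mathrm{hlfs}$-observation (``every loop of $\mathcal{K}$ with length less than $2\,\mathrm{hlfs}(\mathcal{K})$ is null-homologous in $\mathcal{K}$'') applies to it. Note that $z$ is indeed a cycle, since $\partial\gamma_i = x_i+y_i = \partial\hat\gamma_i$ over $\mathbb{Z}_2$.

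First I would record what unicoloredness buys us. Since every vertex $p$ appearing in $\gamma_i$ has $h(p)=u$, i.e.\ $u=\nu_{d_G}(p)$, and $Q$ is a $\delta$-sample of $(P,d_G)$, we get $d_{G}(p,u)\le \delta$; hence the chosen shortest path $\pi(p,u)$ in $G^{\alpha}(P)$ has length $|\pi(p,u)| = d_G(p,u)\le\delta$, and each of its edges (being an edge of $G^{\alpha}(P)$, the $1$-skeleton of $\mathcal{R}^{\alpha}(P)$) has Euclidean length at most $\alpha$. (As remarked for $\pi(u,a)$ and $\pi(v,b)$, a triangle-inequality argument also shows every interior vertex of $\pi(p,u)$ maps to $u$; this is not needed for the present claim, but it keeps $\hat\gamma_i$ inside the $u$-class for the later assembly.)

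Next I would set up the telescoping. Regard $\gamma_i$ as an edge walk $x_i=p_0,p_1,\dots,p_m=y_i$ in the $1$-skeleton of $\mathcal{R}^{\alpha}(P)$; a general unicolored $1$-chain with boundary $x_i+y_i$ decomposes over $\mathbb{Z}_2$ into one such walk plus closed walks, and each closed walk is handled by the identical argument below, so this form is no loss of generality. For $j=1,\dots,m$ put
\[
C_j \;=\; p_{j-1}p_j \;+\; \pi(p_j,u) \;+\; \pi(u,p_{j-1}),
\]
a $1$-cycle of $\mathcal{R}^{\alpha}(P)$ (its boundary vanishes mod $2$, and all its edges lie in $\mathcal{R}^{\alpha}(P)$ by the previous paragraph). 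Since $\pi(u,p_{j-1})$ and $\pi(p_{j-1},u)$ are the same $\mathbb{Z}_2$-chain, the interior path pieces cancel in consecutive pairs and $\sum_{j=1}^{m} C_j = \gamma_i + \pi(p_0,u) + \pi(u,p_m) = \gamma_i + \hat\gamma_i = z$. Finally, $|C_j| \le \|p_{j-1}-p_j\| + |\pi(p_j,u)| + |\pi(p_{j-1},u)| \le \alpha + 2\delta$, and the hypothesis $\delta < \tfrac12\mathrm{hlfs}(\mathcal{R}^{\alpha}(P)) - \tfrac12\alpha$ gives $|C_j| \le \alpha + 2\delta < \mathrm{hlfs}(\mathcal{R}^{\alpha}(P)) < 2\,\mathrm{hlfs}(\mathcal{R}^{\alpha}(P))$, so each $C_j$ is null-homologous. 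Hence $z = \sum_j C_j$ is null-homologous, i.e.\ $\gamma_i$ is homologous to $\hat\gamma_i$.

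The only genuinely delicate point, and hence the main obstacle, is the first step: certifying that the corrector path $\pi(p,u)$ has length at most $\delta$. This is exactly where it matters that $Q$ is a $\delta$-sample with respect to the graph metric $d_G$ (not the Euclidean metric) and that $h$ is the nearest-point map for $d_G$; once that length bound is in hand, the remainder is a telescoping identity together with a single invocation of the $\mathrm{hlfs}$-observation.
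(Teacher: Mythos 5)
Your proof is correct and follows essentially the same route as the paper: both decompose $\gamma_i + \hat\gamma_i$ as a telescoping sum of small cycles of the form $p_{j-1}p_j + \pi(p_j,u) + \pi(u,p_{j-1})$, bound each by $\alpha + 2\delta$ using the $\delta$-sample property of $Q$ under $d_G$, and invoke the $\mathrm{hlfs}$ bound to conclude each is null-homologous. Your write-up is somewhat more careful than the paper's (e.g.\ noting how to reduce a general unicolored chain to an edge walk), but the underlying idea is identical.
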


Given the chain $\gamma_{uv}$ with subchains $\gamma_i$
as in equation~\ref{eqn:gamma},
we convert it to a homologous chain
\[
\hat{\gamma}_{uv}=\hat{\gamma_0} + y_0x_1+\hat{\gamma_1}
+\cdots + y_{k-1}x_k+\hat{\gamma}_k .
\]
Replace $\gamma_{uv}$ with $\hat{\gamma}_{uv}$
in $\ell_{uv}$ to obtain a homologous cycle
$\hat{\ell}_{uv}$; refer to Figure~\ref{hlfs-fig}.
Observe that $\hat{\ell}_{uv}$
is the sum of cycles (diamonds) that have two unicolored
chains and two bicolored edges as shown in Figure~\ref{hlfs-fig}(d). Such
a cycle $c$ has length at most $4\delta +2\alpha$.  This is because
each unicolored chain in $c$ has at most two shortest paths
of the form $\pi(u,x_i)$ and $\pi(u,y_i)$ (or $\pi(v,x_i)$ and $\pi(v,y_i)$)
that have lengths $2\delta$
or less ($Q$ is a $\delta$-sample of $(P,d_G)$), and the two bicolored
edges have lengths at most $2\alpha$ in $G^{\alpha}(P)$. The cycle
$c$ is null homologous because its length is
$$|c| \le 4 \delta + 2\alpha < 2 \mathrm{hlfs}(\mathcal{R}^{\alpha}(P)), ~~~~\text{given that}~~\delta < \frac{1}{2}\mathrm{hlfs}(\mathcal{R}^{\alpha}(P)) - \frac{1}{2}\alpha. $$
It follows that $\hat{\ell}_{uv}$ is null
homologous as we are required to prove.

We only need to show Claim~\ref{path-reduce} to finish the proof. Let
$x_i=p_0,p_1,\cdots,p_m=y_i$ be the sequence of vertices
on the unicolored chain (path) $\gamma_i$. Consider the shortest
paths $\pi(p_i,u)$ for each $p_i$ on this path. The
length of the cycle $z_i=\pi(u,p_i)+p_ip_{i+1}+\pi(u,p_{i+1})$ is
at most $2\delta +\alpha$ for each $i\in [0,m-1]$. Therefore,
it is null homologous by our assumption. We have
$\gamma_i+\hat{\gamma_i}= \sum_{i-0}^{m-1} z_i=0$.
Therefore, $\gamma_i$ and $\hat{\gamma_i}$ are homologous.
\end{proof}

\subsection{Point data for compact sets}
\label{compact-sec}
So far we have focused on $\homo_1$-homology. In this section
we extend the domain to compact subspaces of Euclidean spaces and
consider homology groups of all dimensions. This generality comes
at the expense of additional computations. Unlike previous approaches
that allow us to infer the $\homo_1$-homology of the
sampled manifold by computing directly the same for the graph
induced complexes, now we need to compute the persistent homology~\cite{EH09}
induced by simplicial maps. The well-known algorithms for computing
persistent homology~\cite{EH09} work for maps induced
by inclusions. In a contemporary paper~\cite{DFW12}, we present 
an algorithm that can compute the persistent homology induced by
simplicial maps.

Let $X\subset \mathbb{R}^n$ be a compact set and $X^{\lambda}$ be its 
offset with $\lambda > 0 $. Since it is difficult to
compute $\homo_k(X)$ from a sample~\cite{CO08},
we aim for computing
the homology groups $\homo_k(X^{\lambda})$ for the offset $X^{\lambda}$.
Let $\mathrm{wfs}(X)$ denote the 
the\textit{ weak feature size} which is defined as
the smallest positive critical value of the distance function
to $X$~\cite{CL05}. 
We prove that the persistent homology of the graph 
induced complex defined with either Euclidean or graph distance $d$ provides 
the correct homology of $X^{\lambda}$ where $0 < \lambda < \mathrm{wfs}(X)$.
Specifically, the image of 
$h_*:\homo_k(\G^{\alpha}(P,Q,d))\rightarrow \homo_k(\G^{\alpha'}(P,Q',d))$ 
induced by a simplicial map $h:\G^{\alpha}(P,Q,d)\rightarrow \G^{\alpha'}(P,Q',d)$
becomes isomorphic to $\homo_k(X^{\lambda})$ for
appropriate $\alpha$ and $\alpha'$.
We recall the following result from \cite{CO08}. 
\begin{proposition}
\label{lemma:six-seq}
If the sequence of homomorphisms $A\rightarrow B \rightarrow C \rightarrow D \rightarrow E \rightarrow F$ between
finite dimensional vector spaces satisfies that
 $\mathrm{rank}(A\rightarrow F) = \mathrm{rank}(C\rightarrow D)$,
then $\mathrm{rank}(B\rightarrow E) = \mathrm{rank}(C\rightarrow D)$.
\end{proposition}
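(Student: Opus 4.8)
The plan is to prove Proposition~\ref{lemma:six-seq} as a purely linear-algebraic fact about a chain of six linear maps between finite-dimensional vector spaces. Label the maps $f:A\to B$, $g:B\to C$, $p:C\to D$, $q:D\to E$, $r:E\to F$, and write $\phi = p\circ g\circ f: A\to F$ and $\psi = q\circ p : C\to E$. The hypothesis is $\rank(\phi) = \rank(p)$, and we want $\rank(q\circ p\circ g) = \rank(p)$, i.e. $\rank(B\to E) = \rank(C\to D)$. The two directions of the inequality are handled separately: the ``$\le$'' direction is automatic, and the ``$\ge$'' direction is where the hypothesis gets used.

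First I would dispose of the easy inequality. Since $q\circ p\circ g$ factors through $p$ (namely $q\circ p\circ g = q\circ(p\circ g)$, and its image is contained in $q(\operatorname{im} p)$, hence $\rank(q\circ p\circ g)\le \rank(p)$); more directly, $\operatorname{im}(q\circ p\circ g)\subseteq \operatorname{im}(q\circ p)\subseteq \operatorname{im}(q|_{\operatorname{im} p})$, so $\rank(B\to E)\le \rank(C\to D)$. The work is all in the reverse inequality $\rank(B\to E)\ge \rank(p)$. Here I would chase the chain: on one hand $\operatorname{im}(\phi) = r(q(p(g(f(A)))))\subseteq r(q(p(g(B)))) = r(\operatorname{im}(q\circ p\circ g))$, so $\rank(\phi)\le \rank(q\circ p\circ g) = \rank(B\to E)$. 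Combining with the hypothesis $\rank(\phi)=\rank(p)=\rank(C\to D)$, this already gives $\rank(C\to D)\le \rank(B\to E)$, which together with the easy direction finishes the proof.

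So the argument is really just two short image-containment chains plus the hypothesis; the main (and only) subtlety is making sure each containment is oriented correctly, namely that applying a further map can only shrink (or preserve) the image, and that precomposing with a map into a subspace can only shrink the image. I expect no genuine obstacle here — this is the standard ``rank stability in a long sequence'' lemma used in persistence arguments (it is essentially the observation underlying interleaving-type bounds), and the finite-dimensionality is used only implicitly to guarantee all ranks are finite so the inequalities make sense. If one wanted extra robustness I would instead phrase it via the four-term exact-sequence / images-and-kernels bookkeeping, but the two-chain argument above is cleaner and self-contained.
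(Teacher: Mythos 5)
Your argument is correct. There is a typo at the outset — you write $\phi = p\circ g\circ f$ but immediately type it as a map $A\to F$, so you clearly mean the full composite $\phi = r\circ q\circ p\circ g\circ f$; the rest of the argument uses $\phi$ consistently in that sense. The two image-containment chains are right: precomposing with $g$ (or $g\circ f$) feeds a smaller domain into $q\circ p$, and postcomposing with $r$ can only shrink the image, so
$\rank(A\to F)\le\rank(B\to E)\le\rank(C\to D)$,
and the hypothesis $\rank(A\to F)=\rank(C\to D)$ squeezes the middle term. For what it's worth, the paper does not prove this proposition; it states it and cites Chazal--Oudot (2008). Your proof is the standard sandwich argument for ranks in a chain of maps and is exactly the one underlying that reference, so there is no genuine divergence to report — just supply the missing factors in the definition of $\phi$ when you write it up.
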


Let $Q$ and $Q'$ be subsamples of $P$ where
$Q$ is a $\delta$-sparse $\delta$-sample and 
$Q'$ is a $\delta'$-sparse $\delta^{'}$-sample for $\delta^{'} > \delta$.
Consider the interleaving sequence between the 
graph induced and Rips complexes,
\begin{equation} \label{diag:gic-rips}
\xymatrix
{
\Rips^{\alpha}(P) \ar@{^{(}->}[r]^-{i_1} \ar@{->}[rd]^-{h_1}  
& \  \Rips^{\alpha+2\delta}(P) \  \ar@{^{(}->}[r]^-{i_2} 
& \ \Rips^{4(\alpha+2\delta)}(P) \  \ar@{^{(}->}[r]^-{i_3} \ar@{->}[d]^-{h_2} 
& \ \Rips^{4(\alpha+2\delta)+2\delta^{'}}(P)  
\\
& \G^{\alpha}(P,Q,d)  \ar@{.>}[r]^-{h} \ar@{^{(}->}[u]^-{j_1}
& \G^{4(\alpha+2\delta)}(P,Q',d) \   \ar@{^{(}->}[ru]^-{j_2}  
&
}
\end{equation}
where $i_1,i_2,i_3,j_1$ and $j_2$ are inclusions and $h=h_2\circ i_2\circ j_1$.
By Proposition~\ref{prop:contg}, $h_1$ and $h_2$ are simplicial maps .
Therefore, $h$ is also a simplicial map as composition of simplicial maps.
In particular, $h$ is the simplicial map induced by the vertex map that 
maps each point $q \in Q$ to its closest point $q'\in Q'$ in $Q'$. 
We prove that $\mathrm{im}\ h_* \cong \homo_{k}(X^{\lambda})$ where 
$h_{*} : \homo_{k}(\G^{\alpha}(P,Q,d)) \rightarrow \homo_{k}(\G^{4(\alpha+2\delta)}(P,Q',d))$
and $\eps$ , $\alpha$ and  $\delta$ fall in appropriate ranges.

\begin{theorem}
Let $X\subset \mathbb{R}^{n}$ be a compact space. 
Let $0 < \eps < \frac{1}{9}\mathrm{wfs}(X)$ and $P$ 
be an $\eps$-sample of $(X,d_E)$.
Let $Q$ be a $\delta$-sparse $\delta$-sample of $(P,d)$ 
and $Q'$ be a 
$\delta'$-sparse $\delta^{'}$-sample of $(P,d)$ 
where $d$ is either Euclidean or graph distance and
$\delta^{'} > \delta$.\\

If $2\eps \leq \alpha \leq \frac{1}{4}(\mathrm{wfs}(X)-\eps)$ and
$(\alpha + 2\delta)+\frac{1}{2}{\delta}^{'} \leq \frac{1}{4}(\mathrm{wfs}(X)-\eps)$, then 
$\mathrm{im}\ h_* \cong \homo_{k}(X^{\lambda})$ $(0 < \lambda < \mathrm{wfs}(X))$ where 
$h_{*} : \homo_{k}(\G^{\alpha}(P,Q,d)) \rightarrow \homo_{k}(\G^{4(\alpha+2\delta)}(P,Q',d))$ is induced by $h$ in diagram~\ref{diag:gic-rips}.
\end{theorem}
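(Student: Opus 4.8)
The plan is to combine the interleaving diagram~\ref{diag:gic-rips} with the known fact that Rips complexes interleaved with \v{C}ech complexes (offsets) recover $\homo_k(X^\lambda)$, and then extract the isomorphism $\mathrm{im}\, h_* \cong \homo_k(X^\lambda)$ by a rank-counting argument via Proposition~\ref{lemma:six-seq}. First I would recall from~\cite{CO08} (or the standard \v{C}ech-Rips sandwich) that for an $\eps$-sample $P$ of $(X,d_E)$ with $\eps < \frac19 \mathrm{wfs}(X)$ and parameters in the stated range, the image of $\homo_k(\Rips^{a}(P)) \to \homo_k(\Rips^{b}(P))$ induced by inclusion is isomorphic to $\homo_k(X^\lambda)$ for $0 < \lambda < \mathrm{wfs}(X)$ whenever $2\eps \le a \le b$ and $b \le 4(\mathrm{wfs}(X)-\eps)$ (roughly; the exact constants are what force the hypotheses $2\eps\le\alpha$, $\alpha \le \tfrac14(\mathrm{wfs}(X)-\eps)$, and $(\alpha+2\delta)+\tfrac12\delta' \le \tfrac14(\mathrm{wfs}(X)-\eps)$). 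The point of the three successive Rips inclusions $i_1,i_2,i_3$ across radii $\alpha$, $\alpha+2\delta$, $4(\alpha+2\delta)$, $4(\alpha+2\delta)+2\delta'$ is precisely that each consecutive pair, and in fact the two ``long'' composites spanning from the smallest to a large enough radius, induce maps on $\homo_k$ whose rank equals $\dim \homo_k(X^\lambda)$.

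Next I would set up the six-term sequence. Reading off diagram~\ref{diag:gic-rips}, the composite
\[
\homo_k(\Rips^{\alpha}(P)) \to \homo_k(\G^{\alpha}(P,Q,d)) \to \homo_k(\Rips^{\alpha+2\delta}(P)) \to \homo_k(\Rips^{4(\alpha+2\delta)}(P)) \to \homo_k(\G^{4(\alpha+2\delta)}(P,Q',d)) \to \homo_k(\Rips^{4(\alpha+2\delta)+2\delta'}(P))
\]
is a sequence $A \to B \to C \to D \to E \to F$ of finite-dimensional $\mathbb{Z}_2$-vector spaces, where $B = \homo_k(\G^{\alpha}(P,Q,d))$, $E = \homo_k(\G^{4(\alpha+2\delta)}(P,Q',d))$, and $C \to D$ is the Rips inclusion $(i_2)_*$. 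The maps $A\to B$, $B\to C$, $D\to E$, $E\to F$ are $(h_1)_*, (j_1)_*, (h_2)_*, (j_2)_*$ respectively (composing with $(i_2)_*$ appropriately to fit the factorization), and the whole composite $A\to F$ is the Rips inclusion $\Rips^{\alpha}(P)\hookrightarrow \Rips^{4(\alpha+2\delta)+2\delta'}(P)$ since Proposition~\ref{prop:contg} tells us $j_1\circ h_1 \simeq i_1$ and $j_2\circ h_2 \simeq i_3$ (contiguous, hence equal on homology), so the diagram commutes at the homology level. By the \v{C}ech-Rips interleaving result, $\mathrm{rank}(A\to F) = \dim\homo_k(X^\lambda) = \mathrm{rank}(C\to D)$: the first because $A\to F$ is a Rips inclusion over a wide enough radius gap (this is where the constraint $(\alpha+2\delta)+\tfrac12\delta'\le\tfrac14(\mathrm{wfs}(X)-\eps)$ is used, to keep $4(\alpha+2\delta)+2\delta'$ within the allowed range), the second because $C\to D$ is the Rips inclusion from $\alpha+2\delta$ to $4(\alpha+2\delta)$. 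Proposition~\ref{lemma:six-seq} then yields $\mathrm{rank}(B\to E) = \dim\homo_k(X^\lambda)$.

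It remains to upgrade the rank equality $\mathrm{rank}(h_*) = \dim\homo_k(X^\lambda)$, where $h_* : B \to E$, to an \emph{isomorphism} $\mathrm{im}\, h_* \cong \homo_k(X^\lambda)$. For this I would use that $h = h_2\circ i_2 \circ j_1$ at the chain/simplicial level, so $h_* = (h_2)_* \circ (i_2)_* \circ (j_1)_*$, and compare against the Rips inclusion $f_* : \homo_k(\Rips^{\alpha+2\delta}(P)) \to \homo_k(\Rips^{4(\alpha+2\delta)}(P))$, i.e. $(i_2)_*$ itself, whose image is known (from the \v{C}ech-Rips result) to be isomorphic to $\homo_k(X^\lambda)$. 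One checks $\mathrm{im}\, h_* \subseteq (j_2)_*^{-1}$-pushforward-compatible with $\mathrm{im}\,(i_2)_*$, and more directly: $(j_2)_*\circ h_* = (i_3)_*\circ (i_2)_*\circ (j_1)_*$ while $(j_1)_*\circ(h_1)_* = (i_1)_*$ is surjective onto... — the cleanest route is to observe that $j_2$ being an inclusion makes $(j_2)_*$ restricted to $\mathrm{im}\, h_*$ injective iff the ranks match, which they do by the counting above together with the analogous count for the composite including $(j_2)_*$; hence $\mathrm{im}\, h_* \cong \mathrm{im}((j_2)_* h_*) = \mathrm{im}((i_3)_*(i_2)_*(j_1)_*)$, and a final application of the interleaving/six-term argument identifies this with $\mathrm{im}(i_2)_* \cong \homo_k(X^\lambda)$. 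The main obstacle I anticipate is this last bookkeeping step — going from equality of ranks to a genuine isomorphism requires carefully tracking that the relevant maps are not merely equal in rank but that the images sit inside one another compatibly, so that the induced map between images is forced to be an isomorphism; this is exactly the kind of place where one must invoke Proposition~\ref{lemma:six-seq} a second time (or an injectivity/surjectivity sandwich) rather than hand-wave, and getting the precise arithmetic of the radii $\alpha$, $\alpha+2\delta$, $4(\alpha+2\delta)$, $4(\alpha+2\delta)+2\delta'$ to all lie within the single admissible window dictated by $\mathrm{wfs}(X)$ and $\eps$ is the delicate quantitative heart of the argument.
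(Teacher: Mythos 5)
Your proof is correct and follows essentially the same route as the paper: set up the six-term sequence $A\to B\to C\to D\to E\to F$ read off from diagram~\ref{diag:gic-rips}, use the contiguity in Proposition~\ref{prop:contg} to see it commutes on homology so that $A\to F$ equals the long Rips inclusion $(i_3\circ i_2\circ i_1)_*$, invoke Theorem~3.6 of~\cite{CO08} to get $\mathrm{rank}(A\to F)=\mathrm{rank}(C\to D)=\dim\homo_k(X^\lambda)$, and conclude $\mathrm{rank}(B\to E)=\dim\homo_k(X^\lambda)$ by Proposition~\ref{lemma:six-seq}. That is precisely the paper's argument.

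One comment on your last paragraph: the worry about ``upgrading'' the rank equality to an isomorphism is unfounded, and the second invocation of a sandwich argument you gesture at is unnecessary. All homology here is taken with $\mathbb{Z}_2$ coefficients, so every $\homo_k(\cdot)$ is a finite-dimensional vector space over a field. For any linear map $h_*:B\to E$ between such spaces, $\mathrm{im}\,h_*$ is itself a vector space of dimension $\mathrm{rank}(h_*)$, and two finite-dimensional vector spaces over the same field are isomorphic iff they have the same dimension. Hence $\mathrm{rank}(h_*)=\dim\homo_k(X^\lambda)$ already \emph{is} the statement $\mathrm{im}\,h_*\cong\homo_k(X^\lambda)$; there is no further compatibility of images to check. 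The genuinely delicate part, which you correctly identify, is the arithmetic: verifying that the four radii $\alpha$, $\alpha+2\delta$, $4(\alpha+2\delta)$, $4(\alpha+2\delta)+2\delta'$ all fall in the window required by Theorem~3.6 of~\cite{CO08}, which is exactly what the hypotheses $2\eps\le\alpha$ and $(\alpha+2\delta)+\tfrac12\delta'\le\tfrac14(\mathrm{wfs}(X)-\eps)$ enforce.
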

\begin{proof}
The diagram \ref{diag:gic-rips} is not commutative in general.
However, it is commutative at the homology level.
Proposition~\ref{prop:contg} makes the two triangles at the left and
right commutative. The middle square commutes by definition of $h$.
Now consider the sequence,
\begin{eqnarray}
\xymatrix 
{
\homo_{k}(\Rips^{\alpha}(P)) \ar@{->}[r]^-{{h_1}_{*}} 
& \homo_{k}(\G^{\alpha}(P,Q,d))  \ar@{->}[r]^-{{j_1}_{*}}
& \  \homo_{k}(\Rips^{\alpha+2\delta}(P)) 
\\
\  \ar@{->}[r]^-{{i_2}_{*}}   & \ \homo_{k}(\Rips^{4(\alpha+2\delta)}(P)) \  \ar@{->}[r]^-{{h_2}_{*}} 
& \homo_{k}(\G^{4(\alpha+2\delta)}(P,Q',d)) \   \ar@{->}[r]^-{{j_2}_{*}} 
& \ \homo_{k}(\Rips^{4(\alpha+2\delta)+2\delta^{'}}(P))  
}
\label{seq}
\end{eqnarray}
Consider the sequence of inclusions at the upper level of the diagram (1).
Since $\alpha \geq 2\eps$ and $(\alpha + 2\delta)+\frac{1}{2}{\delta}^{'} \leq \frac{1}{4}(\mathrm{wfs}(X)-\eps)$,
we have that 
$$\mathrm{im}\ (i_3\circ i_2\circ i_1)_{*} \cong \homo_{k}(X^{\lambda})
\mbox{ and }
\mathrm{im}\ (i_2)_{*} \cong \homo_{k}(X^{\lambda})$$ 
by Theorem $3.6$ of \cite{CO08}.
Considering the diagram in (1) and the sequence in (\ref{seq}) we have
\begin{eqnarray}
\homo_k(X^{\lambda})\cong \mathrm{im}\ ({i_3}_{*} \circ {i_2}_{*} \circ {i_1}_{*}) \cong \mathrm{im} (({j_2}_*\circ {h_2}_*) \circ {i_2}_* \circ ({j_1}_{*} \circ {h_1}_{*}))\cong 
\mathrm{im}\ {i_2}_{*}
\label{seq3}
\end{eqnarray}
Letting $h=h_2\circ i_2 \circ j_1$, the rightmost
isomorphism in (\ref{seq3}) allows us to 
claim that $\mathrm{im}\ h_* \cong \homo_{k}(X^{\lambda})$ by
applying Proposition~\ref{lemma:six-seq} to the sequence (\ref{seq}).
\end{proof}

\section{Conclusions}
In this work, we investigated the graph induced complex that 
can be built upon a given point cloud data
and a suitable graph connecting them.
This complex, to some extent, has the advantages of both Rips and witness
complexes. We have identified several of its topological
properties that can evidently 
be useful in extracting information from
point data even in high dimensions. 

In section~\ref{compact-sec}, we have shown how one can infer the homology
groups of a compact set using the persistent homology of a pair
of graph induced complexes constructed with two values of $\delta$.
One can consider a filtration of $\G^{\alpha}(P,Q,d)$ with $Q$ being
sparsified for increasing values of $\alpha$ and $\delta$. Then, one can obtain
a persistence diagram~\cite{CEH07} out of this 
`full filtration' using our recently
proposed algorithm for computing the topological persistence for 
filtrations connected with simplicial maps~\cite{DFW12}. The algorithm
will collapse vertices progressing through the filtration and hence
will keep the size of the complex in question contained. 
Relating this persistence diagram to that of a filtration obtained by
a related Rips filtration is an interesting
question. In a subsequent work,
we have addressed this question in~\cite{DFW12}. 

Finding other applications where the graph induced complex
becomes useful also remains open for further investigations.  

\section*{Acknowledgment} We acknowledge the support of the NSF grants
CCF 1116258 and CCF-1064416 and thank all the referees for their valuable
comments.

\vfill\eject

\newpage

\appendix

\section{Proof of Proposition \ref{LEMMA:FIND-POINT} }
\label{appendix:find-point}

\newcommand{\anangle}		{\alpha}
\newcommand{\mycone}		{\mathrm{C}}
\newcommand{\mys}		{\ell}

First, we present an elementary geometric result that we need to use in the proof. 
Let $\mycone(o, \vec{v}, \anangle)$ denote the cone with apex $o$, axis in the direction of $\vec{v}$ and aperture $2\anangle$. 
\begin{claim}
Given a ball $B$ with radius $r$ and center $c$, let $q$ be an arbitrary point on the surface of sphere. 
Consider the two nested cones $\mycone_1 = \mycone(c,\overrightarrow{cq}, \anangle)$ and $\mycone_2 = \mycone(c, \overrightarrow{cq}, 2\anangle)$ with the same axis. 
Let $p$ be any point from the intersection of the ball and the inner cone; that is, $p \in B \cap \mycone_1$. 
Let $x$ be an arbitrary point from the boundary of $B$ outside the outer cone; that is, $x \notin \mycone_2$, and $x \in \partial B$. 
Then we have that $\| p - q \| < \| p - x \|$. 
\label{claim:twocones}
\end{claim}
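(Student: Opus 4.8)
The plan is to reduce the stated Euclidean inequality to a comparison of two angles measured at the center $c$, and then obtain that angle comparison from the triangle inequality for angular distance. First I would place $c$ at the origin, so that $q$ and $x$ both lie on the sphere of radius $r$ and $\|p-c\| = \|p\| \le r$ because $p \in B$. Expanding the squared distances and using $\|q\| = \|x\| = r$ to cancel, one gets the identity
\[
\|p-x\|^2 - \|p-q\|^2 \;=\; 2\langle p, q-x\rangle \;=\; 2\,\|p\|\,r\,(\cos\angle pcq - \cos\angle pcx).
\]
Hence, provided $p \neq c$ (the degenerate case $p=c$ makes $\angle pcq$ undefined and gives equality; it does not occur in our application of the claim), the desired inequality $\|p-q\| < \|p-x\|$ is equivalent to $\cos\angle pcq > \cos\angle pcx$, i.e., since cosine is strictly decreasing on $[0,\pi]$, to $\angle pcq < \angle pcx$.

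Next I would bound these two center angles using the hypotheses on the cones. Because $p \in \mycone_1 = \mycone(c,\overrightarrow{cq},\anangle)$, the axis $\overrightarrow{cq}$ makes angle at most $\anangle$ with $\overrightarrow{cp}$, so $\angle pcq \le \anangle$. Because $x \notin \mycone_2 = \mycone(c,\overrightarrow{cq},2\anangle)$, we have $\angle qcx > 2\anangle$. Angular distance on the unit sphere (the angle between unit vectors, valued in $[0,\pi]$) is a metric and therefore satisfies the triangle inequality; applying it to the directions $\overrightarrow{cp}, \overrightarrow{cq}, \overrightarrow{cx}$ gives
\[
\angle pcx \;\ge\; \angle qcx - \angle pcq \;>\; 2\anangle - \anangle \;=\; \anangle \;\ge\; \angle pcq .
\]
Thus $\angle pcq < \angle pcx$ (strictly, since $\angle pcx > \anangle \ge \angle pcq$), which by the reduction above yields $\|p-q\| < \|p-x\|$.

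I do not expect a genuine obstacle here: the statement is elementary once it is phrased in terms of angles at $c$. The only points that need a little care are (i) keeping the inequality strict — which is why it is worth recording both $p \neq c$ (so that $\|p\|\,r > 0$) and the strict angle comparison $\angle pcx > \angle pcq$ — and (ii) making sure the angular triangle inequality and the monotonicity of cosine are invoked on the correct range, namely that all three of $\angle pcq$, $\angle qcx$, $\angle pcx$ lie in $[0,\pi]$, which is automatic for angles between vectors. Everything else is the routine law-of-cosines computation sketched above.
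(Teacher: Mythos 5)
Your proof is correct and follows essentially the same route as the paper's: reduce the distance comparison to a comparison of the center angles $\angle pcq$ and $\angle pcx$, then conclude via the law of cosines (your dot-product identity is the same computation in disguise). You are in fact slightly more careful than the paper, which simply asserts $\angle pcq < \anangle < \angle pcx$ without justification, whereas you explicitly invoke the angular triangle inequality $\angle pcx \ge \angle qcx - \angle pcq > 2\anangle - \anangle$ to obtain the needed bound, and you also flag the degenerate case $p=c$.
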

\begin{proof}
Denote $\anangle_q := \angle pcq$ and $\anangle_x :=\angle pcx$. 
Because $x$ is outside of the outer-cone with aperture $4\anangle$, and $p$ is inside of the inner cone of aperture $2\anangle$, we have that  
we have $\anangle_q < \anangle < \anangle_x$.
Now consider the triangle $\triangle pcq$. By the Cosine Law, we have that
\[
\|p - q \|^{2} = \|p - c \|^{2} + \|c - q\|^{2} - 2\|p - c\|\cdot\|c - q\| \cos(\anangle_d) =\|p - c \|^2 +  r^2  - 2r\|p-c\| \cos(\anangle_q). 
\]
Similarly, consider the $\triangle pox$, and we have 
\[
 \|p - x\|^{2} = \|p - c\|^{2} + \|c - x \|^{2} - 2\|p - c\|\cdot \|c - x\| \cos(\anangle_p) =\|p - c\|^2 +  r^2 - 2r\|p-c\|\cos(\anangle_x) .
\]
Since $0 \le \anangle_q < \anangle_x \le \pi$, we have $\|p - q\| < \|p - x \|$. 
\end{proof}

\vspace*{0.1in} Now consider the surface Delaunay ball $B_c = B(c,r)$ that passing through the vertices of the simplex $\sigma$ and containing no other points from $Q$. 
Recall that $q$ is an arbitrary vertex of $\sigma$. 
Since all other vertices of $\sigma$ are at least $\delta$-Euclidean distance away from $q$,  we then have that the intersection of $B_c$ with the cone $\mycone(c, \overrightarrow{cq}, 2\arcsin \frac{\delta}{2r})$ contains no point from $Q$ other than $q$. 
By applying Claim \ref{claim:twocones} with $\anangle = \arcsin \frac{\delta}{2r}$, we then obtain that:
\begin{cor}
If there exists a point $p \in P$ such that $p \in B_c \cap \mycone(c, \overrightarrow{cq}, \arcsin \frac{\delta}{2r}) \cap M$, then $q$ must be the closest point to $p$ among all points in $Q$. 
\label{cor:NN}
\end{cor}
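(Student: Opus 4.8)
This is an application of Claim~\ref{claim:twocones}. The plan is to fix an arbitrary $q'\in Q$ with $q'\neq q$ and to prove $\|p-q\|<\|p-q'\|$; since $q'$ is arbitrary, this exhibits $q$ as a point realizing $\min_{q''\in Q}\|p-q''\|$, which is the assertion. I would apply Claim~\ref{claim:twocones} with its ball $B$ taken to be the surface Delaunay ball $B_c=B(c,r)$, its inner cone taken to be $\mycone(c,\overrightarrow{cq},\anangle)$ and its outer cone $\mycone(c,\overrightarrow{cq},2\anangle)$ for $\anangle=\arcsin\frac{\delta}{2r}$; the hypothesis $p\in B_c\cap\mycone(c,\overrightarrow{cq},\anangle)\cap M$ already places $p$ in the ball and the inner cone, so the only thing left to settle is the location of each competitor $q'$ relative to $B_c$ and the outer cone.

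Two facts do this. \emph{Emptiness:} because $B_c$ is a surface Delaunay ball of $\sigma$, its open ball contains no point of $Q$, hence $\|c-q'\|\geq r$, and the only points of $Q$ on $\partial B_c$ are the vertices of $\sigma$. \emph{Angular separation from sparsity:} any vertex $q'$ of $\sigma$ with $q'\neq q$ satisfies $\|q'-q\|\geq\delta$ since $q,q'$ are distinct points of the $\delta$-sparse set $Q$; as $q$ and $q'$ lie on the sphere of radius $r$ about $c$, the relation $\|q'-q\|=2r\sin(\angle q'cq/2)$ forces $\angle q'cq\geq 2\arcsin\frac{\delta}{2r}=2\anangle$, so $q'$ lies outside $\mycone(c,\overrightarrow{cq},2\anangle)$. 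Thus for a competitor $q'$ on $\partial B_c$, Claim~\ref{claim:twocones} applied with $x=q'$ gives $\|p-q\|<\|p-q'\|$ at once. For a competitor with $u:=\|c-q'\|>r$ I would rerun the cosine-law estimate of Claim~\ref{claim:twocones} with $u$ in place of $r$: writing $\anangle_q=\angle pcq$, $\anangle_{q'}=\angle pcq'$ and $s=\|p-c\|\leq r$, one has
\[
\|p-q'\|^2-\|p-q\|^2 = (u^2-r^2) + 2s\big(r\cos\anangle_q - u\cos\anangle_{q'}\big),
\]
and the new term $u^2-r^2>0$, together with the elementary inequality $u^2-2ru\cos\anangle_{q'}+r^2\geq 2r^2(1-\cos\anangle_{q'})$ that holds for all $u\geq r$, keeps the right-hand side strictly positive whenever $\anangle_{q'}>\anangle_q$; and if instead $q'$ happens to be angularly near $q$, the bound $\|q'-q\|\geq\delta$ is invoked once more to supply the missing gap. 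In every case $\|p-q\|<\|p-q'\|$.

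The step I expect to require the most care — and the natural place for the author's proof to expand — is that very last sub-case: a point $q'\in Q$ lying just outside $B_c$ but nearly in the direction of $q$ as seen from $c$ could a priori be closer to the interior point $p$ than $q$ is, and excluding it needs the \emph{sparsity} of $Q$ in an essential way (mere sampling density would not suffice), combining the radial gap $\|c-q'\|\geq r$ with the chordal gap $\|q'-q\|\geq\delta$. The remaining ingredients — the cosine law and the monotonicity of $\cos$ on $[0,\pi]$ — are routine and are already encapsulated in Claim~\ref{claim:twocones}.
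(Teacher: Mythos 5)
Your treatment of the competitors on $\partial B_c$ — the Delaunay emptiness of $B_c$, the chord‐length computation $\|q'-q\|=2r\sin(\angle q'cq/2)\geq\delta\Rightarrow\angle q'cq\geq 2\anangle$, and the direct appeal to Claim~\ref{claim:twocones} — matches the paper's argument exactly. Where you go further is in trying to say something explicit about competitors $q'$ with $\|c-q'\|>r$; the paper's own text only invokes Claim~\ref{claim:twocones}, whose statement compares $p$ against points $x\in\partial B_c$ and is silent about points strictly outside $B_c$, so this is indeed the point requiring elaboration.

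Your first sub-case ($\anangle_{q'}>\anangle_q$) is in fact airtight, though a cleaner route than the inequality you quote is to note that the expression $(u^2-r^2)+2s(r\cos\anangle_q-u\cos\anangle_{q'})$ is affine in $s$, is equal to $u^2-r^2>0$ at $s=0$, and at $s=r$ equals $(u^2+r^2-2ru\cos\anangle_{q'})-2r^2(1-\cos\anangle_q) \geq 2ur(1-\cos\anangle_{q'})-2r^2(1-\cos\anangle_q)>0$ for $u>r$ and $\anangle_{q'}\geq\anangle_q$. The gap is the complementary sub-case $\anangle_{q'}\le\anangle_q$, i.e.\ a competitor $q'$ outside $B_c$ whose direction from $c$ is roughly that of $p$, which is exactly the case you flag in your last paragraph but do not actually close: ``the bound $\|q'-q\|\ge\delta$ is invoked once more'' is not an argument. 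To see that something more than sparsity is needed, take $p\in\partial B_c$ at angle exactly $\anangle$ from $\overrightarrow{cq}$ and set $q'=q+\delta\,\frac{p-q}{\|p-q\|}$. Then $\|q'-q\|=\delta$, $\|c-q'\|^2=r^2+\delta(\delta-\|p-q\|)>r^2$ so $q'$ lies strictly outside $B_c$, yet $\|p-q'\|=\delta-\|p-q\|<\|p-q\|$ because $\|p-q\|=2r\sin(\anangle/2)>r\sin\anangle=\delta/2$. So $p$ satisfies every hypothesis in the corollary as literally stated, $q'$ is a legal point of a $\delta$-sparse $Q$ disjoint from $B_c^\circ$, and yet $q$ is not the nearest point of $Q$ to $p$. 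The corollary is only safe in its intended use because the $p$ produced in Proposition~\ref{LEMMA:FIND-POINT} additionally satisfies $\|p-c\|\le 4\eps$, which keeps $p$ well inside $B_c$ and in particular pushes the minimizer of $\|p-\cdot\|$ over $\{\|q'-q\|\geq\delta\}$ back inside $B_c^\circ$, where no point of $Q$ can sit. A correct argument therefore must either carry a quantitative bound $\|p-c\|\le 4\eps\le\delta/2$ into the hypotheses and use it in the remaining sub-case, or give a genuinely different estimate; a bare appeal to sparsity, as in your sketch, does not suffice.
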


In what follows, we will show that a point $p \in P$ satisfying the conditions in Corollary \ref{cor:NN} as well as that $p \in B(c, 4\eps)$ 
indeed exists when $8\eps \le \delta \le \frac{2}{27}\rho(M)$. This will then prove the proposition. 
Specifically, we will first identify a sample point $p\in P$, and then we will show that $p$ satisfies the requirements of the proposition. 

\parpic[r]{\includegraphics[height=4cm]{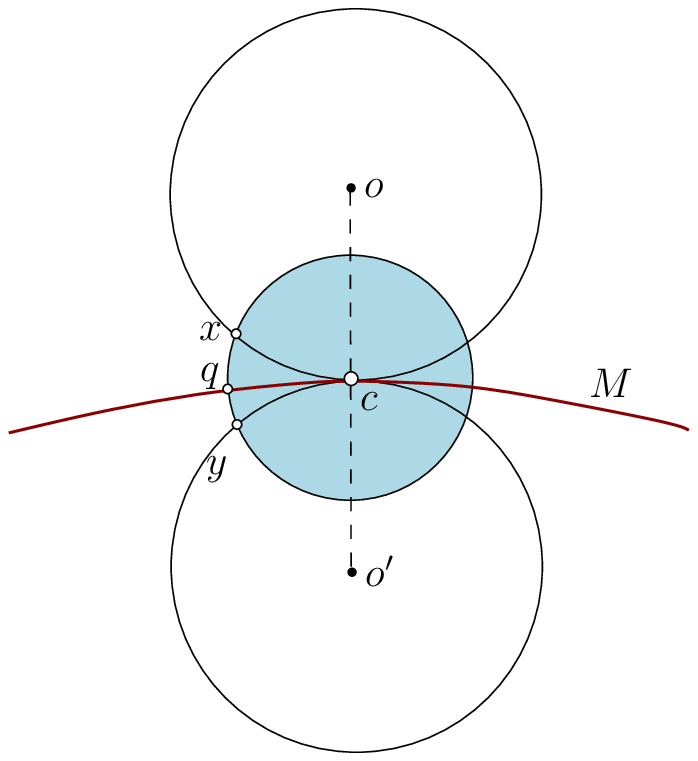}}
\vspace*{0.08in}\noindent{\bf Identifying a point $p \in P$.~}
Let $B_o=B(m,\rho)$ and $B_{o'}=B(m',\rho)$ be two balls tangent to $M$ at $c$; assume without loss of generality that $B_o$ is inside of $M$ and $B_{o'}$ is outside. Locally around $c$, the surface $M$ is sandwiched between $B_o$ and $B_{o'}$. 
Now consider the plane $\mathcal{P}= \mathrm{span}\{o, o', q\}$; note that $c$ also lies in $\mathcal{P}$. 
Denote $B_{c,\mathcal{P}} = B_{c} \cap \mathcal{P}$,
$B_{o, \mathcal{P}} = B_{o} \cap \mathcal{P}$ and $B_{o', \mathcal{P}} = B_{o'} \cap \mathcal{P}$. 
Let $x$ be the intersection point of $B_{c, \mathcal{P}}$ and $B_{o, \mathcal{P}}$ that is on the same side of the line passing through $oo'$ as the point $q$. Similarly, let $y$ be the intersection point of $B_{c, \mathcal{P}}$ with $B_{o', \mathcal{P}}$ on the same side of the line $oo'$ as $q$. 
Obviously, $q$ lines on the arc $\stackrel{\frown}{xy}$ that  
avoids $B_{o, \mathcal{P}}$ and $B_{o', \mathcal{P}}$. 
See the right figure for an illustration where the shaded region is $B_{c, \mathcal{P}}$. 
Set $\theta:= \angle xcy$; easy to see that $\theta = \angle xoc = \angle yo'c$. Hence we have that  
$\sin \frac{\theta}{2} = \frac{\|c - x \|}{2 \rho} = \frac{r}{2\rho}.$

Now consider the segment $cq$ and the point $w \in cq$ such that the length of $cw$ is a value $\mys$ which we will set later. 
As $w$ is contained in the cone with apex $c$ and aperture $\theta$ in the plane $\mathcal{P}$, the ball $B(w, \mys \sin \theta )$ will intersect both segment $cx$ and $cy$, thus intersecting both $B_o$ and $B_{o'}$. 
Since $B_o$ is inside the surface $M$ and $B_{o'}$ outside, it follows that $B(w, \mys \sin \theta) \cap M \neq \emptyset$. 
Pick any point $p' \in B(w, \mys \sin \theta) \cap M$. By the $\eps$-sampling condition of $P$, there must exist a sample point $p \in P$ such that $\| p - p'\| \le \eps$. 
In other words, there is a sample point $p \in P$ such that $p \in B(w, \mys \sin \theta + \eps)$. 

\paragraph{The requirements on $p$.}
We now need to show that the parameter $\mys$ can be chosen such that the point $p$ satisfies all the requirements from the proposition. 
In particular, we need the following: 
\begin{description}\denselist
\item[C-1] $p \in B(c, 4\eps)$; and 
\item[C-2] $q$ is the closest point to $p$ among all points in $Q$. 
\end{description}


Now set $\tau = \arcsin \frac{\mys\sin \theta + \eps}{\mys}$. Obviously, the ball $B(w, \mys \sin \theta + \eps)$ (and thus the point $p$) is contained inside the cone $\mycone(c, \overrightarrow{cq}, \tau)$. 
Observe that by Corollary \ref{cor:NN}, condition (C-2) is satisfied if (C-2.a) $B(w, \mys \sin \theta + \eps) \in B(c, r)$ (implying that $p \in M$), and (C-2.b) $\tau \le \arcsin \frac{\delta}{2r}$ (implying that $p$ is contained in the inner cone $\mycone(c, \overrightarrow{cq}, \arcsin \frac{\delta}{2r})$).  

\paragraph{The existence of a valid $\mys$.}
What remains is to find a value for $\mys$ so that (C-1), (C-2.a), and (C-2.b) are all satisfied simultaneously. 
Note that since $\| w - c \| = \mys$, we have that $\| p - c \| \le \mys + \mys \sin \theta + \eps$. 
Hence condition (C-1) is satisfied if $\mys + \mys \sin \theta + \eps \le 4\eps$. Since $\sin \theta \le 2 \sin \frac{\theta}{2} = \frac{r}{\rho}$, (C-1) holds as long as the following inequality holds. 
\begin{align}
\mys \le \frac{3\eps}{1 + \frac{r}{\rho}}. 
\label{eqn:R1}
\end{align}

Since $\delta \ge 8 \eps$, if (C-1) holds, then we have that $\| p - c \| \le 4\eps \le \frac{\delta}{2} \le r$, which implies (C-2.a). 
Now consider condition (C-2.b), which holds if $\frac{\mys \sin \theta + \eps}{\mys} \le \frac{\delta}{2r}$. 
Since $\delta/2 \le r \le \delta + \eps$ and $8\eps \le \delta < 2\rho / 27$, we have that: 
$$
\frac{r}{\rho} \le \frac{\delta+\eps}{\rho} \le \frac{\delta + \delta/8}{\rho} < \frac{1}{4} \le \frac{\delta}{4\delta} \le \frac{\delta}{2r}. 
$$
That is, $\frac{\delta}{2r} - \sin \theta \ge \frac{\delta}{2r} - \frac{r}{\rho} > 0$ (recall that $\sin \theta \le 2 \sin \theta = r / \rho$). 
Hence condition (C-2.b) holds if 
\begin{align}
\mys \ge \frac{\eps}{\frac{\delta}{2r}-\frac{r}{\rho}} (\ge \frac{\eps}{\frac{\delta}{2r} - \sin \theta}). 
\label{eqn:R2}
\end{align}

Putting Eqns (\ref{eqn:R1}) and (\ref{eqn:R2}) together, we have that as long as the value $\mys$ satisfying the following inequality:
\begin{eqnarray}
\frac{\eps}{\frac{\delta}{2r}-\frac{r}{\rho}} \le \mys \le \frac{3\eps}{1 + \frac{r}{\rho}}. 
\label{eqn:combined}
\end{eqnarray}
 then conditions (C-1) and (C-2) will be satisfied, and there exists a point $p \in P$ as stated in the proposition. 
Given that $8 \eps \le \delta < 2\rho / 27$, we can show that valid $\mys$ exists. For example, for $\mys = \frac{36\eps}{13}$, inequality in Eqn (\ref{eqn:combined}) holds as
\begin{align*}
\frac{\eps}{\frac{\delta}{2r}-\frac{r}{\rho}} \le \frac{\eps}{\frac{\delta}{2(\delta+\eps)} - \frac{\delta+\eps}{\rho}} \le \frac{\eps}{\frac{4\delta}{9\delta} - \frac{\frac{9}{8}\delta}{\rho}} < \frac{\eps}{\frac{4}{9} - \frac{\frac{9}{8}\cdot \frac{2\rho}{27}}{\rho}} = \frac{36\eps}{13} = \mys; 
\end{align*} 
and  
\begin{align*}
\frac{3\eps}{1 + \frac{r}{\rho}} \ge \frac{3\eps}{1 + \frac{\delta+\eps}{\rho}} \ge \frac{3\eps}{1+\frac{\frac{9}{8}\delta}{\rho}} > \frac{3\eps}{1 + \frac{\frac{9}{8}\cdot \frac{2\rho}{27}}{\rho}} = \frac{36\eps}{12} = \mys. 
\end{align*}

\end{document}